\newtheorem{theorem}{Theorem}%[section]
\newtheorem{proposition}{Proposition}%[section]
\newtheorem{assume}{Assumption}
\numberwithin{equation}{section}
\def\bea{\begin{eqnarray}}
\def\eea{\end{eqnarray}}
\def\be{\begin{equation}}
\def\ee{\end{equation}}
\def\ba{\begin{align}}
\def\ea{\end{align}}
\def\bse{\begin{subequations}}
\def\ese{\end{subequations}}
\newcommand{\nn}{\nonumber}
\def\det{\,{\rm det}\, }
\DeclareMathOperator{\Td}{Td}
\DeclareMathOperator{\ch}{ch}
\DeclareMathOperator{\Erfc}{Erfc}
\newcommand{\Tr}{\mbox{Tr}}
\newcommand{\sgn}{\mbox{sgn}}
\def\({\left(}
\def\){\right)}
\def\[{\left[}
\def\]{\right]}
\def\<{\left\langle}
\def\>{\right\rangle}
\def\hf{{1\over 2}}
\newcommand{\p}{\partial}
\newcommand{\vth}{\vartheta}
\newcommand{\eps}{\epsilon}
\newcommand{\I}{\mathrm{i}}
\newcommand{\cD}{\mathcal{D}}
\newcommand{\cF}{\mathcal{F}}
\newcommand{\cS}{\mathcal{S}}
\newcommand{\cM}{\mathcal{M}}
\newcommand{\cN}{\mathcal{N}}
\newcommand{\cE}{\mathcal{E}}
\newcommand{\cT}{\mathcal{T}}
\newcommand{\cI}{\mathcal{I}}
\newcommand{\cO}{\mathcal{O}}
\newcommand{\IR}{\mathds{R}}
\newcommand{\IZ}{\mathds{Z}}
\newcommand{\IN}{\mathds{N}}
\newcommand{\IP}{\mathds{P}}
\def\scM{\mathscr{M}}
\def\ba{\bar a}
\def\bw{\bar w}
\def\bM{\bar M}
\def\btau{\bar \tau}
\def\bOm{\bar\Omega}
\def\hq{\hat q}
\def\CY{\mathfrak{Y}}
\def\hint{h^{\rm (int)}}
\def\hpol{h^{\rm (p)}}
\def\han{h^{\rm (an)}}
\def\hh{h^{(0)}}
\def\hhs#1{h^{(#1)}}
\def\ths#1{\theta^{(#1)}}
\def\vths#1{\vartheta^{(#1)}}
\def\whh{\widehat h}
\def\whg{\widehat g}
\def\whG{\widehat G}
\def\Thi#1{\Theta^{(#1)}}
\def\Gi#1{G^{(#1)}}
\def\Mi#1{M^{(#1)}}
\def\Di#1{\Delta^{#1}}
\def\whGi#1{\whG^{(#1)}}
\def\OmMSW{\Omega^{\rm MSW}}
\newcommand{\q}{\mbox{q}}
\def\DDb{D6-$\overline{\rm D6}$\ }
\def\mm{\ell_0}
\def\Nr{r}
\newcommand{\symfootnote}[1]{%
\let\oldthefootnote=\thefootnote%
%\stepcounter{mpfootnote}%
\setcounter{mpfootnote}{2}%
\addtocounter{footnote}{-1}%
\renewcommand{\thefootnote}{\fnsymbol{mpfootnote}}%
\footnote{#1}%
\let\thefootnote=\oldthefootnote%
}
\title{Modular bootstrap for D4-D2-D0 indices on compact Calabi-Yau threefolds}
\preprint{arXiv:2204.02207v4}
\author{Sergei Alexandrov$^1$, Nava Gaddam$^2$, Jan Manschot$^3$,  Boris Pioline$^4$
\\
$^1$ {\it Laboratoire Charles Coulomb (L2C), Universit\'e de Montpellier,
CNRS, \\ F-34095, Montpellier, France}\\

$^2$ {\it Institute for Theoretical Physics and Center for Extreme Matter and Emergent Phenomena,
Utrecht University, 3508 TD Utrecht, The Netherlands}\\

$^3$ {\it School of Mathematics, Trinity College, Dublin 2, Ireland}\\

$^4$ {\it Laboratoire de Physique Th\'eorique et Hautes
Energies (LPTHE), UMR 7589 CNRS-Sorbonne Universit\'e,
Campus Pierre et Marie Curie, \\
4 place Jussieu, F-75005 Paris, France} \\

\vspace*{2mm} {\tt e-mail:
\email{sergey.alexandrov@umontpellier.fr},
\email{gaddam@uu.nl},
\email{manschot@maths.tcd.ie},
\email{pioline@lpthe.jussieu.fr}
}

\vspace*{-3mm}

}
\abstract{
We investigate the modularity constraints on the generating series $h_r(\tau)$ of BPS
indices counting D4-D2-D0 bound states with fixed D4-brane charge $r$ in type IIA
string theory compactified on complete intersection Calabi-Yau threefolds with $b_2 = 1$.
For unit D4-brane, $h_1$ transforms as a (vector-valued) modular form under the action of $SL(2,\IZ)$
and thus is completely determined by its polar terms. We propose an Ansatz for these terms in terms
of rank 1 Donaldson-Thomas invariants, which incorporates contributions from a single D6-$\overline{\rm D6}$ pair.
Using an explicit overcomplete basis of the relevant space of weakly holomorphic modular forms
(valid for any $r$), we find that for 10  of the 13 allowed threefolds,
the Ansatz leads to a solution for $h_1$ with integer Fourier coefficients,
thereby predicting an infinite series of DT invariants.
For $r > 1$, $h_r$ is mock modular and determined by its polar part together with its shadow.
Restricting to $r = 2$, we use the generating series of Hurwitz class numbers to construct
a series $\han_{2}$ with exactly the same modular anomaly as $h_2$,
so that the difference $h_{2}-\han_2$ is an ordinary modular form fixed by its polar terms.
For lack of a satisfactory Ansatz, we leave the determination of these polar terms as an open problem.
}
\begin{document}

\setlength{\parskip}{0.2cm}

\section{Introduction}

Elucidating the microscopic origin of the Bekenstein-Hawking entropy of black holes has been one of
the most fruitful endeavours in string theory, with amazing quantitative success for BPS black holes
in highly supersymmetric vacua. In trying to extend this program for type IIA string vacua with
$\cN=2$ supersymmetry in four dimensions (the minimal amount of supersymmetry allowing for BPS states),
D4-D2-D0 black holes play a special role, as they can be lifted to M5-branes wrapped
on divisor $\cD$ (i.e. a complex four-cycle) inside the Calabi-Yau (CY) threefold $\CY$  \cite{Maldacena:1997de}.
This allows an effective description in terms of a two-dimensional $(0,4)$ superconformal field theory (SCFT)
obtained by reducing the (still mysterious) six-dimensional $(2,0)$ SCFT on the five-brane world-volume \cite{Maldacena:1997de, Minasian:1999qn}.
In particular, the generating series of BPS indices for fixed D4 and D2-brane charges is determined
by the elliptic genus of the two-dimensional SCFT and is therefore expected to be modular.
This fact can be used to bootstrap\footnote{The term `bootstrap' covers a multitude of
non-perturbative approaches to determine physical quantities from general consistency constraints and some
basic assumptions about the spectrum. In the context of two-dimensional (super) conformal field theories,
constraints from modularity are a powerful tool to learn about the (BPS and non-BPS) spectrum, see for example in
\cite{Witten:2007kt, Gaberdiel:2008xb, Gaiotto:2007xh,Hellerman:2009bu, Keller:2012mr, Collier:2016cls, Hartman:2019pcd, Mukhi:2019xjy}. }
the full generating series from some small set of data, such as
BPS indices of D4-D2-D0 bound states with the smallest possible values of the D0-brane charge $q_0$.
From a mathematical viewpoint, this opens a way to access an infinite set of rank-zero Donaldson-Thomas (DT) invariants,
which are notoriously difficult to compute directly, and a straightforward method
to extract the asymptotic growth of the BPS indices in the `Cardy regime'
$q_0\to\infty$ with fixed D4 and D2-brane charges.

In practice, this approach requires (i) a complete characterization of the modular properties
of the generating series, and (ii) an ability to determine its polar coefficients.
So far it has been implemented only for a few examples of compact CY threefolds with one K\"ahler modulus
(such as the quintic threefold) and a single D4-brane wrapping a smooth ample divisor $\cD$.
For such divisors, the generating series $h_{1,\mu}(\tau)$ of BPS indices, with
fixed primitive D4-brane charge $p=[\cD]\in H_4(\CY)$, D2-brane charge $\mu\in H_2(\CY)$
(modulo spectral flow)
and fugacity $\q=e^{2\pi\I\tau}$ conjugate to the D0-brane charge,
behaves as a vector-valued (VV) modular form of fixed weight
$w=-\frac12 b_2(\CY)-1$ and fixed multiplier system under $SL(2,\IZ)$ transformations
$\tau\mapsto (a\tau+b)/(c\tau+d)$ \cite{Maldacena:1997de,Gaiotto:2006wm,deBoer:2006vg}.
Since the dimension of the space of such VV modular forms is bounded
by the number of {\it polar} terms \cite{Bantay:2007zz,Manschot:2007ha,Manschot:2008zb},
i.e. those with inverse powers of $\q$,
the knowledge of the latter is sufficient to fix the whole generating series.
The computation of $h_{1,\mu}(\tau)$ thus reduces to fixing a finite number of coefficients.

Several techniques  for computing the polar coefficients have been developed in the literature,
either by directly quantizing the moduli space of D4-D2-D0-brane
configurations for low D0-brane charge, as demonstrated for the quintic threefold
in \cite{Gaiotto:2005rp,Gaiotto:2006wm} and for a handful of other one-parameter models in \cite{Gaiotto:2007cd},
or by using the AdS/CFT correspondence in the near horizon
geometry \cite{Gaiotto:2006wm}.
A more systematic approach is to view the polar D4-D2-D0 states as bound states of D6-branes and
$\overline{\rm D6}$-branes \cite{Denef:2007vg}, and exploit the relation $Z_{D6}\sim Z_{\rm top}$
between the partition function $Z_{D6}$ of D6-D2-D0 indices with unit D6-brane charge
(also known as rank-one DT invariants) and the topological
string partition function $Z_{\rm top}$ \cite{gw-dt},
which is in turn determined by the Gopakumar-Vafa (GV) invariants.
This approach was applied in \cite{Collinucci:2008ht,VanHerck:2009ww}
to the quintic and a couple of other one-parameter CYs,
confirming the analysis in \cite{Gaiotto:2006wm,Gaiotto:2007cd}.
It however assumes that only a single \DDb pair
contributes, an assumption which needs to be verified case-by-case
through a detailed analysis of the possible attractor flow trees
\cite{Denef:2001xn,Denef:2007vg,Manschot:2010xp,Alexandrov:2018iao}
or more general\footnote{Multi-centered scaling solutions \cite{Denef:2002ru,Bena:2006kb,Bena:2012hf,Descombes:2021egc}
are not accessible by attractor flow tree techniques,
but are amenable to localization methods~\cite{Manschot:2010qz,Manschot:2011xc}.}
multi-centered configurations \cite{deBoer:2008zn,Manschot:2010qz,Manschot:2011xc,Gaddam:2016xum}.
An important check  is that the proposed polar terms should allow
for the existence of a VV modular form
with integer Fourier coefficients. This requirement is especially non-trivial when the dimension of
the space of VV modular forms is strictly smaller than the number of polar terms,
which implies that  the polar coefficients must satisfy certain
linear constraints for a modular form with this polar part to exist
\cite{Manschot:2007ha,Manschot:2008zb}.

\begin{table} [t]
\begin{centering}
$$
\begin{array}{|l|r|r|r|r|r|r|r|r|r|r|}
\hline \mbox{CICY} & \chi& \kappa  &c_{2}  & \chi(\cO_\cD)  &  \chi(\cO_{2\cD})   & n_1 & n_2 & C_1 & C_2  & {\rm type} \\   \hline
X_5(1^5)   &-200   &5                 &  50 &   5 & 15  & 7 &36 & 0 & 1 & F\\
 X_6(1^4,2)& -204& 3 & 42& 4 &11  & 4 &   19 & 0 & 1 &F \\
X_8(1^4,4)  &-296   &2              &  44  &   4 & 10&   4 &14 & 0 & 1 &F \\
 X_{10}(1^3,2,5)& -288&      1 & 34& 3 &7  & 2 & 7 &0 & 0 &F \\
X_{4,3}(1^5,2)&-156   &6              &  48 &    5 &16 &  9 & 42 & 0 & 0 &F \\
 X_{4,4}(1^4,2^2) & -144&          4 & 40& 4  &12 &  6 & 25 & 1 & 1 &K\\
X_{6,2}(1^5,3)&-256   &4             &  52  &    5 &14 &  7 & 30 & 0 & 1 &C\\
 X_{6,4}(1^3,2^2,3)& -156&         2 & 32& 3 &8 &  3 & 11 & 0 & 1 &F\\
X_{6,6}(1^2,2^2,3^2)&-120   &1      &  22 & 2&5 &  1 & 5 & 0 & 0 &K\\
X_{3,3}(1^6)&-144   &9         &  54   &   6 &21  & 14 & 78 & 1 & 3 &K \\
 X_{4,2}(1^6)& -176&   8 & 56& 6 &20 &  15 & 69 & 1 & 3 &C\\
X_{3,2,2}(1^7)&-144   &12              &  60 &    7 &26 &  21& 117 & 1 & 0 &C\\
 X_{2,2,2,2}(1^8)    & -128&       16 & 64 & 8 &32 & 33 & 185 & 3 & 4 & M\\
\hline
\end{array}
$$
\caption{Relevant data for the 13 smooth CICY threefolds with $b_2(\CY)=1$
(the first 4 columns are taken from \cite{Huang:2006hq}, \cite[Table 3.1]{Huang:2007sb}).
A complete intersection of multidegree $(d_1,\ldots, d_k)$ in weighted projective space
$\mathbb{P}^{k+3}(w_1,\ldots,w_l)$
is denoted $X_{d_1,\ldots, d_k}( w_1^{m_1},\ldots,w_p^{m_p})$ where $m_i$ is the number
of repetitions of the weight $w_p$. The columns $\chi, \kappa, c_2$ indicate the Euler number of $\CY$,
intersection product $\kappa=\cD\cup \cD \cup \cD$ and second Chern class $c_2=\int_{\cD} c_2(T\CY)$.
The columns  $\chi(\cO_\cD)$ and $\chi(\cO_{2\cD})$
indicate the holomorphic Euler characteristic of the structure sheaf on $\cD$ and $2\cD$,
which determine the central charge $c_R$ in the $(0,4)$ SCFT.
The columns $n_1,n_2$ indicate the number of polar terms in the generating series
$h_{1,\mu}$ and $h_{2,\mu}$ (assuming the Castelnuovo bound on GV invariants),
while $C_1,C_2$ indicate the difference between the number of polar terms and
the actual dimension of the space of VV modular forms. Each model has a one-dimensional K\"ahler moduli space with
three singular points, including the standard large volume limit at $\psi=\infty$
and a conifold singularity at $\psi=1$; the last column of the table indicates the type of singularity at $\psi=0$
in the terminology of \cite{Joshi:2019nzi}.
\label{table1}}
\end{centering}
\end{table}

\medskip

In this work, our aim is twofold. First, we generalize the analysis of
\cite{Gaiotto:2006wm,Gaiotto:2007cd,Collinucci:2008ht,VanHerck:2009ww}
to the complete list \cite{Doran:2005gu}\footnote{We do not include the 14-th case $X_{2,12}$
from \cite{Doran:2005gu}, since it does not correspond to a smooth threefold \cite{clingher201514th}.
For brevity we also refrain from considering other types of compact CY threefolds  with $b_2(\CY)=1$,
such as the $\IZ_5$ quotient of the quintic \cite{Gaiotto:2006wm},
R{\o}dland's determinantal and Pfaffian threefolds \cite{Hosono:2007vf},
or the Reye congruence threefold \cite{Hosono:2011np}, but our analysis applies just as well to these cases.
}
of 13 compact CY threefolds with $b_2(\CY)=1$ obtained as a complete intersection
in weighted projective space (CICY, see Table \ref{table1}).
Second, we attempt to extend this analysis to the case of non-primitive D4-brane charge $p=2[\cD]$.
Our main results in the first direction
are as follows (the first item being also relevant for the second goal):

\begin{itemize}
\item[a)] An explicit overcomplete basis spanning the space $\scM_{\Nr}(\CY)$ of VV modular forms
characterized by the weight and multiplier system of $h_{\Nr,\mu}$ where $\Nr$ is the wrapping number,
i.e $p=\Nr[\cD]$ with $[\cD]$ being the primitive generator of $H_4(\CY,\IZ)$.
This basis is similar in spirit to the one constructed for $\Nr=1$ in \cite{Gaiotto:2007cd},
but less contrived and valid for any $\Nr$ and all one-parameter threefolds
(the dependence on  $\CY$ arises through the triple intersection number $\kappa$ and the second Chern class $c_2$.)
We use it to check the difference between the dimension of $\scM_\Nr(\CY)$
and the number of polar terms predicted by the Selberg trace formula \cite{Manschot:2008zb},
and to reconstruct VV modular forms from their polar coefficients.

\item[b)]
An Ansatz \eqref{hpolar} for the polar terms of  $h_{1,\mu}$,
in terms of the rank-one Donaldson-Thomas invariants of $\CY$,
which sums the contributions from a single \DDb pair
and reproduces the results of \cite{Gaiotto:2007cd} for
$\CY\in \{X_5,X_6,X_8,X_{10},X_{3,3}\}$. A similar Ansatz is proposed for $h_{\Nr,\mu}$ with $\Nr>1$ in \eqref{hpolargenr},
but we expect that multiple \DDb pairs in general contribute,
so that Ansatz probably captures only part of the contributions to the polar coefficients.

\item[c)]For $\Nr=1$ and all but 3 of the 13 models considered, using the GV invariants computed
in \cite{Huang:2006hq}, we find that the polar coefficients predicted by our Ansatz are consistent
with the existence of a VV modular form with integer coefficients.
Notably, this includes two cases ($X_{3,3}$, already considered in  \cite{Gaiotto:2007cd},
and $X_{4,4}$) where the dimension of the space of VV modular forms is one less than
the number of polar terms, providing a rather strong check on the validity of the Ansatz \eqref{hpolar}.

\item[d)] For the remaining 3 models $X_{4,2},X_{3,2,2},X_{2,2,2,2}$,
we find that the polar coefficients predicted by the Ansatz do not allow for a VV modular form
with integer coefficients. Barring possible errors in the  tables of GV invariants
in \cite{Huang:2006hq}, we suspect that for such models there are additional contributions
to the polar terms which we have not identified.

\end{itemize}

Our second goal is to propose a procedure to construct the generating series $h_{p,\mu}(\tau)$ for higher D4-charges $p=\Nr[\cD]$,
and make it explicit in the case $\Nr=2$.
The main difficulty, which  appears when $\cD$ is a reducible divisor or when the D4-brane charge
is a multiple of an irreducible divisor class, is that  $h_{p,\mu}$ fails to be a modular form.
Instead, it turns out to be a VV {\it mock} modular form of {\it higher depth} and {\it mixed type},
which transforms inhomogeneously under the action of $SL(2,\IZ)$
\cite{Alexandrov:2016tnf,Alexandrov:2017qhn,Alexandrov:2018lgp,Chattopadhyaya:2021rdi}
(following on earlier works \cite{Manschot:2009ia,Alim:2010cf,Dabholkar:2012nd,Cheng:2017dlj}).
More specifically, this means that $h_{p,\mu}$ admits a
non-holomorphic completion $\widehat{h}_{p,\mu}$
that transforms as a VV modular form of the same weight and
multiplier system as in the irreducible case,
at the expense of being non-holomorphic. In \cite{Alexandrov:2016tnf,Alexandrov:2018lgp,Alexandrov:2019rth},
this completion has been constructed explicitly in terms of products of $h_{p_i,\mu_i}$'s such
that $p=\sum_{i=1}^\Nr p_i$ (with $\Nr-1$ being the {\it depth}). The fact that $h_{p,\mu}$
is of {\it mixed type} (meaning that the $\bar\tau$-derivative of
$\widehat{h}_{p,\mu}$, known as the shadow, is not anti-holomorphic)
implies that it is necessary to specify both the polar terms and the
shadow\footnote{In contrast, `pure' mock modular forms can be recovered
from their polar terms by a Poincar\'e series-type construction,
which produces both the $\q$-expansion and the shadow, see e.g. \cite{Manschot:2007ha,Cheng:2012qc}.}
in order to fix $h_{p,\mu}$ uniquely.
Given this information, $h_{p,\mu}$ can be reconstructed in two steps.
First, one produces an {\it ad hoc} function $\han_{p,\mu}$ with the same modular anomaly as $h_{p,\mu}$,
such that $\hh_{p,\mu}=h_{p,\mu}-\han_{p,\mu}$ becomes an ordinary VV holomorphic modular form.
Second, one reconstructs  $\hh_{p,\mu}$ from its polar coefficients
by expanding on an explicit (possibly overcomplete) basis, thereby obtaining the generating series $h_{p,\mu}$ of interest.
Clearly, the second step is straightforward (with the help of a computer), while the first step requires some ingenuity.

In this work we implement this idea for the generating series $h_{2,\mu}$
of D4-D2-D0 indices with D4-brane charge $p=2[\cD]$.
Namely, we construct a VV mock modular form $\han_{2,\mu}$ with the same
modular anomaly as $h_{2,\mu}$,
by acting with a suitable Hecke operator $\cT_{\kappa}$ \cite{Bouchard:2018pem}
on the generating series of Hurwitz class numbers  (the simplest example of a VV mock
modular form of depth one).\footnote{Our construction assumes that $\kappa$ is a power of a prime number,
which is the case for all models in Table \ref{table1} except $X_{4,3}$ and $X_{3,2,2}$.
We leave it as an open problem to extend it to general $\kappa$.}
The latter are well-known to arise as rank $2$ Vafa-Witten invariants  on the complex projective plane
\cite{Vafa:1994tf}, which also count D4-branes wrapped twice on the compact divisor
$\IP^2$ inside the non-compact threefold $K_{\IP^2}$.
We then provide an explicit algorithm that determines $h_{2,\mu}$, assuming that its polar coefficients are known.
Unfortunately, when applied to the polar coefficients stipulated by the Ansatz \eqref{hpolargenr},
it fails to produce satisfactory results:
either the polar coefficients do not satisfy the constraints imposed by modularity, or
the resulting generating series turns out to have non-integer Fourier coefficients.
However, as emphasized above,  the Ansatz is unlikely to be correct when $\Nr>1$ anyway.

A detailed supergravity analysis of the multi-centered D4-D2-D0 bound states contributing
to the polar terms is left for future work. Until then, the generating series of
DT invariants computed by our method should be considered as tentative.
We note however that the idea that rank-zero DT invariants (counting D4-D2-D0 bound states)
are determined by rank-one invariants (counting D6-D4-D2-D0 bound states with unit D6-brane charge)
is broadly consistent with the OSV conjecture \cite{Ooguri:2004zv}
and with recent results in the mathematics literature
\cite{Toda:2011aa,Feyzbakhsh:2021rcv,Feyzbakhsh:2021nds,Feyzbakhsh:2022ydn},
although the detailed connection remains elusive.

\medskip

The remainder of this article is  organized as follows. In \S\ref{sec-DT} we briefly review the
definition of ordinary DT invariants,
 D4-D2-D0 indices, and the modular constraints that their generating series
$h_{\Nr,\mu}$ ought to satisfy, specializing to the one-modulus case.
In \S\ref{sec-vecmod} we construct an overcomplete basis of the space $\scM_\Nr(\CY)$
of vector-valued modular forms
in which these generating series would live if the modular anomaly were absent.
In \S\ref{sec-DT1}, we consider D4-D2-D0 indices with unit D4-brane charge $\Nr=1$,
propose an Ansatz for the polar part of the corresponding generating series $h_{1,\mu}$,
and determine the corresponding modular forms. In \S\ref{sec-DT2}, we turn to the $\Nr=2$ case,
and develop a strategy for determining the VV mock modular forms $h_{2,\mu}$, assuming their polar part is known.
In \S\ref{sec-disc} we discuss the possible origin of additional contributions to the polar terms.
In Appendix \ref{sec-constraints} we derive an explicit formula for the dimension of the space
$\scM_{\Nr}(\CY)$, and tabulate the results for low values of $r$.
In \S\ref{sec-Hecke} we construct a Hecke operator producing a solution $\han_{2,\mu}$ of the modular anomaly equation
from the generating series of Hurwitz class numbers.
In \S\ref{sec-genfun1}, for each of the 13 one-parameter CICY threefolds,
we provide the rank 1 DT invariants and the resulting VV modular forms $h_{1,\mu}$ together with their $q$-expansions.
Finally in \S\ref{sec_math}, we review some recent results in the mathematical literature on
rank 0 DT invariants, and compare them with our Ansatz for polar terms.

\medskip

\noindent {\it Note added in v2, updated in v3:} After the first release of this work on {\tt arXiv},
it became apparent in discussions with Soheyla Feyzbakhsh that
the polar coefficients of the generating series $h_{N,\mu}$ can be computed  by generalizing
the approach of \cite{Feyzbakhsh:2022ydn}, provided GV invariants are known to sufficiently high genus and degree. Recent work  using  this strategy \cite{followup} confirms that the
generating series quoted in Appendix \ref{sec-genfun1} are indeed correct for $X_5, X_6, X_8, X_{3,3}, X_{4,4}, X_{6,6}$, verifying the Ansatz \eqref{hpolar} in those cases. However, it is found that some of the polar terms for $X_{10},X_{6,2}, X_{6,4}, X_{4,3}$ and $X_{4,2}$
disagree with this Ansatz. In particular, for $X_{10}$,
the suggestion of \cite{VanHerck:2009ww}, which was initially dismissed in the first release of the present work,
is in fact confirmed (see footnote  \ref{fooWyder}). The polar terms for
$X_{3,2,2},X_{2,2,2,2}$ are currently out of reach by this approach,
due to limitations in computing
GV invariants. We chose to leave unchanged the results in Appendix \ref{sec-genfun1},
but to mark with $\dagger$ the results that we now believe to be incorrect.
\label{noteadded}

\section{DT invariants and D4-D2-D0 bound states}
\label{sec-DT}

In this section we provide a lightning review of BPS indices counting supersymmetric
D4-D2-D0 bound states in type II string theory compactified on a CY threefold $\CY$, and of the
the modular properties of their generating series $h_{p,\mu}$, specializing the relevant formulae
to the one-modulus case $b_2(\CY)=1$.
We refer the reader to our previous works \cite{Alexandrov:2016tnf,Alexandrov:2018lgp} for more details.

\subsection{Generalized DT invariants and spectral flow}
\label{subsec-DTgen}

Recall that in the large volume limit, D6-D4-D2-D0 bound states on $\CY$ are described by
semi-stable coherent sheaves $\cE$ on $\CY$. Their electromagnetic charge $\gamma$
is identified with the Mukai vector $\ch \cE \sqrt{\Td \CY}$.
Expanding $\gamma$ on a basis of $H^{\rm even}(\CY,\IZ)$,
we obtain components $\gamma=(p^0,p^a,q_a,q_0)$ with $a=1,\dots, b_2$ satisfying
the following quantization conditions \cite{Alexandrov:2010ca}:
\be
\label{fractionalshiftsD5}
p^0, p^a\in\IZ ,
\qquad
q_a \in \IZ  + \frac12 \,\kappa_{abc}p^b p^c - \frac{1}{24} p^0 c_{2,a} ,
\qquad
q_0\in \IZ-\frac{1}{24}\,c_{2,a} p^a,
\ee
where $\kappa_{abc}$ are triple intersection numbers of $\CY$ and $c_{2,a}$ are components of its second Chern class.
The mass of such BPS states is proportional to the modulus of the central charge,
which is given in the large volume limit by
\be
Z_\gamma = \int_{\CY} e^{-  z^a \omega_a} \ch \cE \sqrt{\Td \CY},
\ee
where $z^a=b^a+\I t^a$ are the K\"ahler moduli conjugate to the basis $\omega_a$ in $H^2(\CY,\IZ)$.
Under a large gauge transformation $b^a\to b^a+\epsilon^a$, the central charge and hence the mass stay invariant provided
$\cE$ is tensored with a line bundle $\cF$ with $c_1(\cF) = \epsilon^a\omega_a$, an operation
known as spectral flow which shifts the charges as follows
\be
\label{specflowD6}
\begin{split}
p^0\mapsto p^0,
\qquad &
p^a\mapsto p^a+\epsilon^a p^0,
\qquad
q_a \mapsto q_a - \kappa_{abc} p^b \epsilon^c -\frac{p^0}{2}\, \kappa_{abc} \epsilon^b \eps^c,
\\
q_0 \mapsto&\, q_0 -\eps^a q_a +\frac12\, \kappa_{abc} p^a \eps^b \eps^c +
\frac{p^0}{6}\, \kappa_{abc}  \eps^a  \eps^b \eps^c.
\end{split}
\ee
We will denote the resulting charge by $\gamma[\eps]$.

The BPS index $\Omega(\gamma;z^a)$, or generalized Donaldson-Thomas (DT) invariant,
is defined (informally) as the signed Euler number of the
moduli space of semi-stable sheaves with fixed charge $\gamma$, where semi-stability requires that
all subsheaves $\cE'\subset \cE$ have $\arg Z_{\gamma'}\leq \arg Z_{\gamma}$.
Rational DT invariants are defined by the usual multicover formula
\be
\label{defntilde}
\bar\Omega(\gamma;z^a) = \sum_{d|\gamma}  \frac{1}{d^2}\,
\Omega(\gamma/d;z^a)\, ,
\ee
so that $\bar\Omega(\gamma;z^a)=\Omega(\gamma;z^a)$ whenever the charge $\gamma$ is primitive.
Both $\Omega(\gamma;z^a)$ and its rational counterpart are invariant under the
spectral flow \eqref{specflowD6} provided it is combined with $b^a\to b^a+\epsilon^a$.

\subsection{Rank 1 DT invariants and GV invariants}
\label{sec_GVDT}

While our primary interest is in D4-D2-D0 bound states, an important ingredient will be
the ordinary DT invariants which count D6-D4-D2-D0 bound
states with a single unit of D6-brane charge, in the large volume limit. Due to the
symmetry \eqref{specflowD6}, they may be expressed in terms of the invariant D2 and D0 charges\footnote{The shift
proportional to the second Chern class ensures that $Q_a$ is integer,
whereas the integrality of $n$ follows from the integrality of the arithmetic genus $\chi(\cO_{\cD_p})$
in \eqref{defL0}. }
\be
\label{defQn}
Q_a = q_a + \frac12\, \kappa_{abc} p^b p^c + \frac{c_{2a}}{24}\, ,
\qquad
n = - q_0 - p_a q^a - \frac13\, \kappa_{abc} p^a p^b p^c\, .
\ee
Following \cite[\S6.1.2]{Denef:2007vg}, we denote the corresponding DT invariants by
\be
DT(Q_a,n) = \lim_{\lambda\to+\infty} \Omega(1,p^a,q_a,q_0;\lambda(b^a+\I t^a)),
\label{defDTa}
\ee
where $b^a< -t^a \sqrt3$. If instead $b^a>  t^a \sqrt3$, one has
\be
DT(Q_a,-n) = \lim_{\lambda\to+\infty} \Omega(1,p^a,q_a,q_0;\lambda(b^a+\I t^a)).
\label{defDTb}
\ee
Since $DT(Q_a,n)$ vanishes for $n$ negative  and large enough (as a result of Castelnuovo-type bounds), one can construct
the formal series
\be
\label{defZDT}
Z_{DT} (\xi^a,q)= \sum_{Q_a,n} DT(Q_a,n)\,
e^{2 \pi \I Q_a \xi^a}  q^{n},
\ee
where the sum runs over effective curve classes $Q_a\geq 0$. Up to a factor $M(-q)^{\chi_{\scriptstyle\CY}}$,
where $M(q)=\prod_{k>0}(1-q^k)^{-k}$ is the Mac-Mahon function,
the series \eqref{defZDT} coincides with the generating function of stable pair invariants
(see e.g. \cite{Pandharipande:2011jz}). More importantly for our purposes, the series \eqref{defZDT}
can in turn be expressed in terms of GV
invariants $N_Q^{(g)}$ using the GV/DT correspondence \cite{gw-dt,gw-dt2}
\be
\begin{split}
\label{gvc2}
Z_{DT}(\xi^a,q)=&\,
[M(-q)]^{\chi_{\scriptstyle\CY}}
\prod_{Q>0}\prod_{k>0} \left(1-(-q)^k e^{2\pi\I Q_a \xi^a}\right)^{k N_Q^{(0)}}
\\
&\times
\prod_{Q>0}\prod_{g>0}
\prod_{\ell=0}^{2g-2}
\left(1- (-q)^{g-\ell-1} e^{2\pi\I Q_a \xi^a}
\right)^{(-1)^{g+\ell} {\scriptsize \begin{pmatrix} 2g-2 \\ \ell \end{pmatrix}}
N_Q^{(g)}} .
\end{split}
\ee
The right-hand side is well-defined as a formal series, since for any fixed $Q_a$
there is only a finite number of $g$ such that $N_Q^{(g)}\neq 0$. Upon setting $q=e^{\I\lambda}$ and expanding
as $\lambda\to 0$, it provides the perturbative expansion of the topological string partition function,
which can in principle be computed by solving the holomorphic anomaly equations (see e.g. \cite{Klemm:2004km}).
Thus, \eqref{gvc2} gives a practical way of computing the rank 1 DT invariants $DT(Q_a,n)$.

\subsection{Rank 0 DT invariants and their generating series}
\label{subsec-MSWgen}

We now turn to our prime interest, namely
D4-D2-D0 bound states with vanishing  D6-brane charge, $p^0=0$.
In this case, the D4-brane charge $p^a$ is invariant under spectral flow, along with
 the following combination of D0 and D2 charges\footnote{Note that our definition of $\hat q_0$ differs from \cite{Maldacena:1997de}
by an overall sign.}
\be
\label{defqhat}
\hat q_0 \equiv
q_0 -\frac12\, \kappa^{ab} q_a q_b .
\ee
Here $\kappa^{ab}$ is the inverse of $\kappa_{ab}=\kappa_{abc} p^c$, a quadratic form
of signature $(1,b_2(\CY)-1)$ on $\Lambda\otimes \IR\simeq \IR^{b_2(\CY)}$
where $\Lambda=H_4(\CY,\IZ)$.
The Bogomolov-Gieseker bound implies that the BPS index $\Omega(\gamma;z^a)$ vanishes
unless  the invariant charge $\hat q_0$ is bounded from above by
\be
\hat q_0 \leq \hat q_0^{\rm max}=\frac{1}{24}\, \chi(\cD_p),
\label{qmax}
\ee
where $\chi(\cD_p)$ is the Euler number of the divisor $\cD_p=p^a\gamma_a$
(with $\gamma_a$ a basis of effective divisor classes in $H_4(\CY,\IZ)$), given by \cite[Eq.(3.3)]{Maldacena:1997de}
\be
\chi(\cD_p) = \kappa_{abc} p^a p^b p^c+c_{2,a}p^a.
\label{defchiD}
\ee

Using the spectral flow \eqref{specflowD6}, one may remove most of the D2-brane charge $q_a$,
though  not all of it in general. More precisely, one can always decompose
\be
\label{defmu}
q_a = \mu_a + \frac12\, \kappa_{abc} p^b p^c + \kappa_{abc} p^b \epsilon^c,
\ee
for some $\epsilon^a\in\Lambda$ (which can be removed by spectral flow) and
$\mu_a\in \Lambda^*/\Lambda$ (which is invariant under spectral flow), where we use
the quadratic form $\kappa^{ab}$ to identify $\Lambda$ with its image in $\Lambda^*$.
The representative $\mu_a$ in the discriminant group $\Lambda^*/\Lambda$
(a finite group of order $|\det\kappa_{ab}|$) is sometimes known as the residual D2-brane charge.

When $p^a$ is irreducible, there are no walls of marginal stability in the large volume limit,
and the index $\Omega(\gamma;z^a)$ (equal to the rational DT invariant) is independent of $b^a$
and invariant under spectral flow. In contrast, when $p^a$ is reducible, there are walls
of marginal stability extending to large $t^a$, and in this regime $\Omega(\gamma;z^a)$
is only a locally constant function of $z^a$. We define the `MSW invariants'
$\OmMSW(\gamma)=\Omega(\gamma;z^a_\infty(\gamma))$ as the DT invariants evaluated at the large volume attractor point \cite{deBoer:2008fk},
\be
\label{lvolatt}
z^a_\infty(\gamma)=\lim_{\lambda\to +\infty}\(b^a(\gamma)+\I\lambda t^a(\gamma)\)
= \lim_{\lambda\to +\infty}\(-\kappa^{ab}q_b+\I\lambda  p^a\).
\ee
The MSW index $\OmMSW(\gamma)$ should be distinguished from the attractor index
$\Omega_\star(\gamma)$, though both are  by construction moduli-independent.
Since $\Omega(\gamma;z^a)$ is invariant under the combined action of the spectral flow
\eqref{specflowD6} and $b^a\to b^a+\epsilon^a$,
$\OmMSW(\gamma)$ is invariant under \eqref{specflowD6} itself, and therefore depends only
on $p^a,\mu_a$ and $\hat q_0$, so we denote it by $\OmMSW(\gamma)=\Omega_{p,\mu}( \hat q_0)$.
Setting $p=r p_0$ such that $p_0$ is primitive,  $\Omega_{p,\mu}( \hat q_0)$
is given informally by the signed Euler number of the combined moduli space
of the divisor $\cD_{p_0}$ inside $\CY$, equipped with a stable coherent sheaf $\cE$ of rank $r$,
slope $\mu/r$ and discriminant $\Delta=\frac{\chi(\cD_{rp_0})}{24 r} - \frac{\hat q_0}{r}$.
In particular, it is invariant under $\mu\mapsto -\mu$, corresponding to dualizing the sheaf $\cE$.

We can now define $h_{p,\mu}(\tau)$ as the generating series of rational 
MSW invariants\footnote{The definition  in terms rational MSW invariants
was proposed in \cite{Manschot:2009ia,Manschot:2010xp}, motivated by consistency with
wall-crossing. 
Note that the modular parameter
$\q=\exp(2\pi\I \tau)$ in \eqref{defhDT} is unrelated to $q$ in \eqref{defZDT},
which was not expected to have modular properties. The series $h_{p,\mu}(\tau)$
is invariant under $\mu\mapsto\mu+\epsilon$ with $\epsilon\in \Lambda$
and under $\mu\mapsto -\mu$.}
\be
h_{p,\mu}(\tau) =\sum_{\hat q_0 \leq \hat q_0^{\rm max}}
\bOm_{p,\mu}(\hat q_0)\,\q^{-\hat q_0 }\ .
\label{defhDT}
\ee
 As briefly explained in the Introduction,
the generating functions $h_{p,\mu}$ possess remarkable modular properties
under the standard $SL(2,\IZ)$ transformations $\tau\mapsto \frac{a\tau+b}{c\tau+d}$.
The precise properties depend on the divisor $\cD_p$ corresponding to D4-brane charge $p^a$.
If the divisor is irreducible, various physical arguments show\footnote{Even in this simple case,
modularity remains conjectural from a mathematical viewpoint, see e.g. \cite{Feyzbakhsh:2020wvm} for some recent discussion.}
\cite{Gaiotto:2006wm,deBoer:2006vg,Denef:2007vg,Alexandrov:2012au}
that $h_{p,\mu}$ is a weakly holomorphic VV modular form of weight $-\frac{b_2(\CY)}{2}-1$
with the multiplier system determined by the following two matrices for T and S-transformations
\cite[Eq.(2.10)]{Alexandrov:2019rth} (see also \cite{Gaiotto:2006wm, deBoer:2006vg, Denef:2007vg, Manschot:2007ha})
\be
\begin{split}
M_{\mu\nu}(T)=&\, e^{\pi\I\(\mu+\frac{p}{2}\)^2+\tfrac{\pi\I}{12}\, c_{2,a}p^a}\,\delta_{\mu\nu},
\\
M_{\mu\nu}(S)=&\, \frac{(-1)^{\chi(\cO_{\cD_p})}}{\sqrt{|\Lambda^*/\Lambda|}}\, e^{(b_2-2)\frac{\pi\I}{4}}\,
e^{-2\pi\I \mu\cdot\nu}\,,
\end{split}
\label{Multsys-hp}
\ee
where $\mu\cdot\nu=\kappa^{ab}\mu_a \nu_b$,
$\delta_{\mu\nu}$ is the Kronecker delta on the discriminant group $\Lambda^*/\Lambda$, and
$\chi(\cO_{\cD_p})=\frac12(b_2^+(\cD_p)+1)$ is the arithmetic genus given by
\be
\label{defL0}
\chi(\cO_{\cD_p}) = \frac16\, \kappa_{abc} p^a p^b p^c + \frac{1}{12}\, c_{2,a} p^a\, .
\ee

However, if the divisor can be decomposed into a sum of $\Nr$ irreducible divisors,
the generating function can be shown (using physical reasoning based on S-duality of Type IIB string theory \cite{Alexandrov:2018lgp})
to transform as a VV {\it mock} modular form of depth $\Nr-1$ (of the same weight and multiplier system as above).
This implies that its non-holomorphic completion
$\whh_{p,\mu}(\tau,\btau)$,
that transforms as a true modular form, is determined by iterated integrals of depth $\Nr-1$ of another modular form.
Although in \cite{Alexandrov:2018lgp} this modular completion has been found explicitly,
we do not need it here in full generality and will restrict to the case $\Nr=2$, first analyzed in \cite{Alexandrov:2016tnf}.
But before specifying its explicit form, let us further restrict to  CY threefolds
with just one K\"ahler modulus, the class that we analyze in this paper.

\subsection{One modulus case }
\label{subsec-CYone}

Upon restricting to CY threefolds with $b_2(\CY)=1$, many of the equations above  simplify.
Firstly, the indices $a,b,c\dots$ take a single value so that the D4-brane charge $p^a$,
residual D2-brane charge $\mu_a$, intersection numbers $\kappa_{abc}$ and second Chern class $c_{2,a}$ become
scalar quantities which we denote simply as $p$, $\mu$, $\kappa$ and $c_2$.
The discriminant group $\Lambda^*/\Lambda$
coincides with the cyclic group $\IZ_{\kappa p}$ so that $\mu$ can be taken to lie in the interval
$\{0,\dots,\kappa p-1\}$.

Denoting by $\cD$ the generator of $H_4(\CY,\IZ)$, we have $\cD_p=p\cD$.
Therefore, the degree of reducibility $\Nr$ of the divisor coincides with the corresponding D4-brane charge, $\Nr=p$.
The modular weight of the generating functions $h_{\Nr,\mu}$ is always $-3/2$ and
the multiplier system \eqref{Multsys-hp} reduces to
\be
\begin{split}
M_{\mu \nu}(T)=&\,
e^{\frac{\pi\I}{\kappa\Nr}(\mu+\frac{\kappa}{2} \Nr^2 )^2 +\frac{\pi\I}{12}\,\Nr c_{2} }
\,\delta_{\mu\nu},
\\
M_{\mu\nu}(S)=&\,
\frac{(-1)^{\cI_\Nr}}{\sqrt{\I \kappa\Nr}}\,
e^{-\frac{2\pi\I}{\kappa\Nr}\,\mu \nu},
\end{split}
\label{multsys-h2}
\ee
where
\be
\cI_\Nr=\chi(\cO_{\Nr\cD})=\frac{1}{6}\,\kappa \Nr^3+\frac{1}{12}\, c_2 \Nr.
\label{chiODN}
\ee

For $\Nr>1$ the generating functions $h_{\Nr,\mu}$ no longer transform as VV modular forms under $SL(2,\IZ)$,
but rather as  mock modular forms of depth $\Nr-1$ and mixed type.
For $\Nr=2$, their completion  $\whh_{2,\mu}$ can be deduced by specializing
Eq.(1.3) in \cite{Alexandrov:2016tnf} to the case $b_2(\CY)=1$.
This gives
\be
\whh_{2,\mu}(\tau,\btau)=h_{2,\mu}(\tau)+\sum_{\mu_1,\mu_2=0}^{\kappa-1} R_{\mu,\mu_1\mu_2}(\tau,\btau)
\, h_{1,\mu_1}(\tau)\, h_{1,\mu_2}(\tau),
\label{whh2}
\ee
where
\be
R_{\mu,\mu_1\mu_2}(\tau,\btau)=\delta^{(\kappa)}_{\mu_1+\mu_2-\mu}(-1)^{\mu'} \Thi{\kappa}_{\mu'}(\tau,\btau),
\label{defR}
\ee
with $\mu'=\mu-2\mu_1+\kappa$. Here $\delta^{(n)}_x$ is the mod-$n$ Kronecker delta defined by
\be
\label{defdelta}
\delta^{(n)}_x=\left\{ \begin{array}{ll}
1\  & \mbox{if } x=0\!\!\!\mod n,
\\
0\ & \mbox{otherwise},
\end{array}\right.
\ee
while $\Thi{\kappa}_{\mu}(\tau,\btau)$ is the non-holomorphic theta series
\be
\Thi{\kappa}_{\mu}(\tau,\btau)=\frac{1}{8\pi}\, \sum_{k\in 2\kappa\IZ+\mu}|k|\,\beta_{\frac{3}{2}}\!\(\frac{\tau_2}{\kappa}\, k^2\)
e^{-\frac{\pi\I \tau}{2\kappa}\, k^2 },
\label{defRmu}
\ee
where $\beta_{\frac{3}{2}}(x^2)=2|x|^{-1}e^{-\pi x^2}-2\pi \Erfc(\sqrt{\pi} |x|)$.
In particular, \eqref{defRmu} satisfies the holomorphic anomaly equation
\be
\p_{\btau}\Thi{\kappa}_{\mu}=\frac{\sqrt{\kappa}}{16\pi\I\tau_2^{3/2}}\sum_{k\in 2\kappa\IZ+\mu}
e^{-\frac{\pi\I \btau}{2\kappa}\, k^2 }.
\label{holanom-Th}
\ee

In the following sections we shall apply these structural results
for the generating functions $h_{\Nr,\mu}$ to the set of one-parameter CY threefolds
that can be obtained as complete intersections in weighted projective spaces.
The relevant topological data for the corresponding 13 models are specified in Table \ref{table1}.

\section{The space of vector-valued modular forms}
\label{sec-vecmod}

In this section, we analyze the space $\scM_\Nr(\CY)$ of weakly holomorphic vector-valued modular forms
transforming with the same weight (namely, $-3/2$) and multiplier system as the
generating function $h_{\Nr,\mu}$, As explained in the previous section, for $\Nr=1$
the generating series  $h_{1,\mu}$ belongs to $\scM_1(\CY)$, while for $\Nr>1$ the
modular anomaly of $h_{\Nr,\mu}$ only specifies it up to an element in $\scM_\Nr(\CY)$.
Thus, the results in this section will be relevant for both cases.

\subsection{Modular constraints on polar terms}

It is well known that any weakly holomorphic modular form $f_\mu(\tau)$ of weight $w<0$ is completely fixed by its polar part,
i.e. the part of its Fourier expansion
\be
f_\mu(\tau)=\sum_{n=0}^\infty c_{\mu}(n) \,\q^{n-\Delta_\mu}
\label{fourier}
\ee
that becomes singular in the limit $\tau\to\I\infty$ \cite{Manschot:2007ha}.
It is captured by the terms with $n<\Delta_\mu$ and the corresponding $c_{\mu}(n)$ are called `polar coefficients'.
The remaining coefficients are then uniquely determined, for example by constructing
a Poincar\'e series seeded by the polar terms.

Crucially however, the dimension of the space of modular forms is often smaller (though never larger)
than the number of polar terms, which means that the polar coefficients cannot be chosen completely at will.
To allow for the existence of a modular form with given polar part (as opposed to a mock modular form),
the polar coefficients must satisfy $n$ constraints where $n$ is the dimension of the space of cusp modular forms
of weight $2-w$. The latter can be computed, for example, using the Selberg trace formula \cite{Skoruppa85,Manschot:2008zb}.

In Appendix \ref{sec-constraints} we derive the number of polar terms and the number of constraints
that they must satisfy for the case relevant to our study, namely, VV modular forms of weight $-3/2$,
multiplier system \eqref{multsys-h2} and exponents (cf. \eqref{qmax})
\be
\Di{h}_\mu = \frac{\chi(\Nr\cD)}{24} - {\rm Fr}\[ \frac{\mu^2}{2\kappa \Nr} + \frac{\Nr \mu}{2} \] .
\label{Deltamu}
\ee
Here ${\rm Fr}(x)$ denotes the fractional part  $x - \lfloor x \rfloor$ and
\be
\chi(\Nr\cD) = \kappa \Nr^3+c_{2}\Nr.
\label{defchiD-one}
\ee
Applying these results to the 13 one-parameter CICYs, one finds
the data provided in the last four columns of Table \ref{table1}.

\subsection{A universal basis}

For our purposes, we will need a (overcomplete) basis in $\scM_\Nr(\CY)$, which is the space of vector-valued modular forms
of weight $-3/2$, multiplier system \eqref{multsys-h2} and exponents
$\Delta_\mu$ specified in \eqref{Deltamu}.
A convenient choice of a basis can be constructed using the following set of theta series
\be
\label{Vignerasth}
\vths{m,p}_{\mu}(\tau,z)=
\!\!\!\!
\sum_{{k}\in \IZ+\frac{\mu}{m}+\frac{p}{2}}\!\!
(-1)^{mpk}\, \q^{\tfrac{m}{2}\,k^2} e^{2\pi\I m k z}.
\ee
They satisfy
\be
\vths{m,p}_{\mu}(\tau,z)= \vths{m,p}_{\mu+m}(\tau,z)= \vths{m,p}_{-\mu}(\tau,z)
\ee
and transform under $(\tau,z)\mapsto (\frac{a\tau+b}{c\tau+d}, \frac{z}{c\tau+d})$
as a vector-valued Jacobi form of weight 1/2
and multiplier system given by
\be
\begin{split}
M^{(m,p)}_{\mu\nu}(T)=&\, e^{\frac{\pi\I}{m} \(\mu+\tfrac{mp}{2} \)^2}\,\delta_{\mu\nu}\, ,
\\
M^{(m,p)}_{\mu\nu}(S)=&\,
\frac{e^{-\frac{\pi\I}{2}\, m p^2}}{\sqrt{\I m}}\,
e^{-2\pi\I\,\frac{\mu\nu}{m}}.
\end{split}
\label{eq:thetatransforms}
\ee
Note that $\vths{1,1}_{0}(\tau,z)$ coincides with the ordinary Jacobi theta series $\vth_1(\tau,z)$.
Let us then set $(m,p)=(\kappa \Nr,\Nr)$ and consider ratios of the form
\be
\frac{\ths{\Nr,\kappa}_{\mu}(\tau)}{\eta^{4\kappa \Nr^3 +\Nr c_2}(\tau)},
\qquad
\ths{\Nr,\kappa}_{\mu}(\tau)=\left\{
\begin{array}{ll}
\vths{\kappa\Nr,\Nr}_{\mu}(\tau,0), & \quad \kappa r \mbox{ even},
\\
-\frac{1}{2\pi \Nr}\,\p_z\vths{\kappa\Nr,\Nr}_{\mu}(\tau,0), & \quad \kappa r \mbox{ odd},
\end{array}
\right.
\label{defxiforms}
\ee
where $\eta(\tau)$ is the Dedekind eta function.
These functions are modular forms of weight
$-\hf(4\kappa \Nr^3 +\Nr c_2-1)+\delta^{(2)}_{\kappa r-1}$.
Taking into account that the multiplier system of the Dedekind eta function is given by
\be
M^{(\eta)}(T)=e^{\frac{\pi\I}{12} },
\qquad
M^{(\eta)}(S)=e^{-\frac{\pi\I}{4}},
\label{multsys-eta}
\ee
it is straightforward to check that the multiplier system of \eqref{defxiforms} coincides with \eqref{multsys-h2}.
Furthermore, given that $\eta(\tau)\sim \q^{1/24}$ as $\tau\to\I\infty$, it is easy to see that they have the Fourier expansion
of the form \eqref{fourier} with
\be
\Delta_\mu=\frac{4\kappa \Nr^3+c_2\Nr}{24}-\frac{\kappa\Nr}{2}\({\rm Fr'}\[\frac{\mu}{\kappa\Nr}+\frac{r}{2} \]\)^2
=\Di{h}_\mu+n,
\ee
where ${\rm Fr'}(x) = |x - \lfloor x \rceil| $ is the difference with the closest integer,
$\Di{h}_\mu$ is defined in \eqref{Deltamu}, and $n\in \IZ$.
Importantly, the integer $n$ is non-negative,
\be
n=\frac{\kappa \Nr^3}{8}-\frac{\kappa\Nr}{2}\({\rm Fr'}\[\frac{\mu}{\kappa\Nr}+\frac{r}{2} \]\)^2
+{\rm Fr}\[ \frac{\mu^2}{2\kappa \Nr} + \frac{\Nr \mu}{2} \]
\ge \frac{\kappa\Nr}{8}\, (\Nr^2-1)\ge 0,
\ee
where we used that ${\rm Fr'}(x)\le \hf$ and ${\rm Fr}(x)\ge 0$.
This allows to conclude that
the functions \eqref{defxiforms} have the same or larger number of polar terms as we need.

These considerations motivate us to introduce the functions
\be
\hhs{\Nr,\kappa}_\mu[g_\ell](\tau)=g_\ell(\tau)\,
\frac{D^\ell \ths{\Nr,\kappa}_{\mu}(\tau)}{\eta^{4\kappa \Nr^3 +\Nr c_2}(\tau)}\, ,
\label{exp1}
\ee
where $g_\ell(\tau)$ are modular forms of weight
\be
w_\ell=2\kappa \Nr^3+\hf\, \Nr c_2-2-2\ell-\delta^{(2)}_{\kappa r-1}\ .
\ee
Here, $D$ is the Serre derivative, acting on holomorphic modular forms of weight $w$
through  $D = \q \partial_{\q} - \frac{w}{12} E_2$, and $E_2$ is the normalized quasi-modular Eisenstein series.
The functions \eqref{exp1} satisfy all required properties and produce
the desired basis upon choosing an appropriate basis of modular forms $g_\ell$ of weight $w_\ell$.
In particular, since $w_\ell$ is an even integer\footnote{The reason of taking the derivative w.r.t. $z$ for odd $\kappa$
in the definition \eqref{defxiforms} was precisely to ensure this property.},
$g_\ell$ themselves can be represented as polynomials in Eisenstein series $E_4$ and $E_6$.
As a result, any $f\in \scM_\Nr(\CY)$ can be represented as
\be
f_\mu=\sum_{\ell=0}^{\mm} \(\sum_{k=0}^{k_\ell} c_{\ell,k}
\, E_4^{\lfloor w_\ell/4 \rfloor-\eps_\ell-3k}\, E_6^{2k+\eps_\ell}\)
\frac{D^\ell \ths{\Nr,\kappa}_{\mu}(\tau)}{\eta^{4\kappa \Nr^3 +\Nr c_2}(\tau)}\, ,
\label{decomp-modform}
\ee
where $k_\ell=\lfloor w_\ell/12 \rfloor-\delta^{(12)}_{w_\ell-2}$,
$\eps_\ell=\delta^{(2)}_{w_\ell/2-1}$ and $\mm$ is sufficiently large so that
$\sum_{\ell=0}^{\mm}(k_\ell+1)$ is not smaller than the number of polar terms.

\section{BPS indices for single D4-brane}
\label{sec-DT1}

As explained in \S\ref{sec-DT}, the functions $h_{1,\mu}$ are VV modular forms and therefore
they are fixed by their polar terms. In \S\ref{subsec-polar1}, we propose an Ansatz for these terms
and in \S\ref{subsec-result1} we present the results on the reconstruction of the generating functions $h_{1,\mu}$
on the basis of this Ansatz for 13 one-parameter CICY threefolds.

\subsection{Polar terms}
\label{subsec-polar1}

The BPS indices appearing in the polar terms of the generating functions $h_{\Nr,\mu}$
count black hole states with positive invariant $\hq_0$.
Since the area of a single-centered black hole horizon in $\cN=2$ supergravity is
given by $S=2\pi \sqrt{-\hq_0  p^3}$ with $p^3=\kappa_{abc}p^a p^b p^c>0$
\cite{Shmakova:1996nz,Maldacena:1997de}, such  single-centered solutions cannot contribute to polar terms.
Thus, only multi-centered bound states can contribute to such indices.\footnote{One may wonder then
why polar terms are non-vanishing given that there are no bound states at the attractor point
(except for the so-called scaling solutions which require at least three constituents).
However, the BPS indices entering the definition of the generating functions $h_{p,\mu}$ \eqref{defhDT}
are evaluated at the large volume attractor point, and will in general differ from the genuine attractor indices.}
In \cite{Denef:2007vg} it was shown that such  contributions arise from bound
states of D6 and anti-D6 branes with vanishing total D6-charge. Moreover, it was observed that
the `most polar terms`, i.e. the ones with $\hq_0$ sufficiently close to $\hq_0^{\rm max}$,
appear to receive contributions from a single D6-$\overline{\rm D6}$ pair {\it only}.
For the one-parameter  threefolds $X_5$, $X_6$ and $X_8$ and unit D4-brane charge,
this property was confirmed for all polar terms in \cite{Collinucci:2008ht,VanHerck:2009ww}.
These observations suggest the following
\begin{assume}
The polar coefficients in $h_{1,\mu}$ count the number of bound states of the form
\be
\begin{array}{cl}
\mbox{D6}\mbox{--}\mu\mbox{D2}\mbox{--}n\mbox{D0}
\ + \  \overline{\mbox{D6}[-1]}\quad &\mbox{for }0\le\mu\le \kappa/2
\\
\mbox{D6}[1] \ +\  \overline{\mbox{D6}\mbox{--}(-\mu)\mbox{D2}\mbox{--}(-n)\mbox{D0}} \quad &\mbox{for } 0\le -\mu\le \kappa/2,
\end{array}
\label{boundstates}
\ee
where ${\rm D6}[\Nr]$ denotes D6-brane with $\Nr$ units of D4-flux induced by spectral flow.
\label{conj1}
\end{assume}

Let us evaluate the degeneracy of these bound states explicitly.
For the sake of generality, and in order to discuss possible extensions in the following sections,
we will consider more general configurations of the form
\be
N\mbox{D6}[\Nr_1]\mbox{--}m_1\mbox{D2}\mbox{--}n_1\mbox{D0}
+\overline{N\mbox{D6}[-\Nr_2]\mbox{--}m_2\mbox{D2}\mbox{--}(-n_2)\mbox{D0}}.
\label{D6D6}
\ee
The contribution to the BPS index from a bound state with charges $\gamma_1+\gamma_2=\gamma$
is given by the primitive wall-crossing formula
\be
\Delta\Omega(\gamma) =(-1)^{\langle \gamma_1,\gamma_2\rangle} \langle \gamma_1,\gamma_2\rangle\,
\Omega(\gamma_1;z_{12})  \, \Omega(\gamma_2;z_{12}),
\label{primWC}
\ee
where $\langle \gamma_1,\gamma_2\rangle=q_{1,0}p_2^0+q_{1,a}p_2^a-(1\leftrightarrow 2)$ is the Dirac product of charges,
and the BPS indices on the r.h.s. are evaluated at the point $z_{12}$ in the moduli space where the attractor flow
corresponding to the charge $\gamma$ hits the wall of marginal stability corresponding to the decay of the bound state.
The charge vectors of the constituents in \eqref{D6D6} can be obtained by applying the spectral flow \eqref{specflowD6} to the charge vector
describing a $N$D6-$m$D2-$n$D0 bound state which is, consistently with the charge quantization \eqref{fractionalshiftsD5}, given by
\be
\gamma(N,m,n)=\(N,0,m-\frac{N}{24}\, c_2 ,-n\).
\label{chargeD6}
\ee
Then the spectral flow with $\epsilon=r$ gives
\be
\gamma(N,m,n)[\Nr]=\(N,N\Nr,m-\frac{N c_2}{24}
-\frac{N\kappa}{2}\, \Nr^2,-n-rm+\frac{N c_2}{24}\, \Nr+\frac{N\kappa}{6}\, \Nr^3\).
\label{chargeD6D4}
\ee
Choosing $\gamma_1=\gamma(N,m_1,n_1)[\Nr_1]$ and $\gamma_2=-\gamma(N,m_2,-n_2)[-\Nr_2]$, we obtain that the total charge reads
\be
\gamma
=\(0,N\Nr,m-\frac{N\kappa}{2}\, \Nr\bar \Nr ,-n-\hf\,(\bar \Nr m+\Nr\bar m)+\frac{N }{24}\,
\chi(\Nr\cD)+\frac{N\kappa }{8}\, \Nr\bar \Nr^2\),
\label{totalcharge}
\ee
where $\Nr=\Nr_1+\Nr_2$, $\bar \Nr=\Nr_1-\Nr_2$, $n=n_1+n_2$, $m=m_1-m_2$, $\bar m=m_1+m_2$,
and $\chi(\Nr\cD)$ is given in \eqref{defchiD-one}.
The invariant charge \eqref{defqhat} evaluates to
\be
\hq_0=\frac{N}{24}\, \chi(r\cD)-\frac{m^2}{2\kappa r N }-\frac{r\bar m}{2}-n,
\label{hqD4}
\ee
and the Dirac product of the charges of the two bound states is
\be
\langle\gamma_1, \gamma_2\rangle=-N^2\cI_\Nr+N(\Nr\bar m+n),
\label{DiracD6D6}
\ee
where $\cI_\Nr$ is given in \eqref{chiODN}.
Note that both $\hq_0$ and $\langle\gamma_1, \gamma_2\rangle$ do not depend on the parameter $\bar \Nr$.
This is consistent with the fact that under the spectral flow \eqref{specflowD6} acting on the charge vector \eqref{totalcharge},
this parameter is shifted by $2\epsilon$ so that one can always set it either to 0 or 1.
Substituting \eqref{DiracD6D6} into \eqref{primWC} gives the contribution to the BPS index.

According to our Assumption \ref{conj1}, we are interested in much simpler configurations where
$N=\Nr=1$, $m=\mu$, $\bar m=|\mu|$ (for both ranges of $\mu$), in which case
\be
\begin{split}
\bOm_{1,\mu}(\hq_0)=&\, (-1)^{n+|\mu|+\cI_1+1}\(\cI_1-|\mu|-n\)\Omega(\gamma_1;z_{12})\Omega(\gamma_2;z_{12}),
\\
\mbox{where} \qquad &
\hq_0=\frac{\chi(\cD)}{24}-\frac{\mu^2}{2\kappa }-\frac{|\mu|}{2}-n >0,
\end{split}
\ee
and the problem reduces to evaluating the BPS indices $\Omega(\gamma_i;z_{12})$, $i=1,2$.
We further assume
\begin{assume}
The BPS indices $\Omega(\gamma_i;z_{12})$ coincide with their values at large volume, i.e.
$\Omega(\gamma_i;z_{12})=\Omega(\gamma_i;z^a_\infty(\gamma_i))$.
\label{conj2}
\end{assume}
This conjecture implies that the BPS indices coincide with the standard rank 1 DT invariants $DT(|\mu|,n)$,
counting bound states of a single D6-brane with $|\mu|$ D2-branes and $\pm n$
D0-branes  (see \S\ref{sec_GVDT}). In the present case, either $\gamma_1$ or $\gamma_2$ corresponds
to a pure (anti-)D6-brane and the corresponding invariant $DT(0,0)=1$.
Thus, we arrive at the following expression for the polar part of $h_{1,\mu}$\footnote{Note that the second argument of $DT$ is
given by $n$ for both cases in \eqref{boundstates}. The reason for this is that $\sgn(b)=-\sgn(\mu)$ due to \eqref{lvolatt}
and therefore we must use the definitions \eqref{defDTa} and \eqref{defDTb} in the first and second cases, respectively.}
\be
\label{hpolar}
\hpol_{1,\mu} = \q^{-\frac{\chi(\cD)}{24}+\frac{\mu^2}{2\kappa}+\frac{|\mu|}{2}}
 \sum_{n\in \IZ\; :\; \hq_0>0} (-1)^{n+|\mu|+\cI_1+1}(\cI_1- |\mu|-n)\, DT(|\mu|,n)\, \q^n .
\ee
Several remarks about this formula are in order:
\begin{itemize}
\item Eq. \eqref{hpolar} is manifestly consistent with the symmetry $\mu\mapsto -\mu$, and
expected to hold in the range
$-\kappa/2\le \mu\le \kappa/2$.

\item
Note that the sum  is finite because $n$ is bounded from above by the condition $\hq_0>0$,
and from below due to the vanishing of $DT(|\mu|,n)$ for large negative $n$.
In fact, requiring that the most polar term arises for the component $\mu=0$
(in which case $n=0$) leads to a lower bound
\be
n \geq  - \frac{\mu^2}{2\kappa}- \frac{\mu}{2}
\ee
on the possible non-vanishing DT invariants $DT(\mu,n)$, which in turn implies an upper bound on the genus
\be
\label{Castel}
g \leq  \frac{Q^2}{2\kappa}+ \frac{Q}{2} + 1
\ee
for non-vanishing GV invariants $N_Q^{(g)}$. This Castelnuovo-type condition is well known
to hold for the quintic \cite{Katz:1999xq,Huang:2007sb}, and is consistent with the tables
of GV invariants in \cite{Huang:2006hq}. We conjecture that \eqref{Castel} is in fact valid
for all for one-parameter CICYs.

\item
As we discuss in Appendix  \ref{sec_math}, the leading polar coefficient in \eqref{hpolar}
(arising from $\mu=n=0$) agrees with results in the mathematical literature
\cite{Toda:2011aa,Feyzbakhsh:2022ydn}, and subleading polar coefficients are also in broad
agreement.
\end{itemize}
In the following, we shall take the formula  \eqref{hpolar} as our Ansatz for
the polar terms that we use to reconstruct the generating functions $h_{1,\mu}$.
A tentative generalization to higher rank is discussed in \S\ref{subsec-polar2}.

\subsection{Results}
\label{subsec-result1}

We perform the reconstruction of $h_{1,\mu}$ from their polar part for 13 CICY threefolds given in Table \ref{table1}.
To this end, for each of these threefolds, we construct the linear combinations \eqref{decomp-modform}
(for an appropriately chosen $\mm$) and match their polar terms against the ones predicted by \eqref{hpolar}
where DT invariants $DT(|\mu|,n)$ are calculated from the known sets of GV invariants in \S\ref{sec-genfun1}.
This provides a system of linear equations on the coefficients $c_{m,k}$.
If this system has a solution, it gives rise to a VV modular form with the desired polar part.
We further require that
the coefficients of its Fourier expansion should be integer, in order to be interpretable
as BPS indices (or rank-zero DT invariants).\footnote{Note that for $\Nr=1$,
the D4-D2-D0 charge is always primitive and therefore the rational BPS indices
appearing in \eqref{defhDT} coincide with the integer valued ones.}
Here are the results of our analysis:
\begin{itemize}
\item
For 10 out of 13 threefolds, the system of equations on $c_{m,k}$ turns out to have a unique solution with integer Fourier coefficients.
The explicit expressions for the resulting VV modular forms and the first terms in their Fourier expansion are given in \S\ref{sec-genfun1}.
For $X_5$, $X_6$, $X_8$, $X_{10}$ and $X_{3,3}$, our results reproduce those in
\cite{Gaiotto:2006wm,Gaiotto:2007cd,Collinucci:2008ht,VanHerck:2009ww}.

\item
For the remaining 3 models $X_{4,2}$, $X_{3,2,2}$ and $X_{2,2,2,2}$, the polar coefficients
do not allow for the existence of a VV modular form, which indicates that
our Ansatz for the polar terms \eqref{hpolar} needs to be modified in these cases.

\item
In those cases, it is easy to tweak  the polar terms in an  {\it ad hoc} fashion so as to allow for a
solution with integer coefficients.
In particular, this can be done in a `minimal' way by changing only the polar terms for the maximal D2-brane charge $|\mu|=\kappa/2$
and, in the case of $X_{2,2,2,2}$, also for $|\mu|\ge\kappa/2-2$.

\end{itemize}

In view of this last point, one might be reluctant to trust
the solutions found in the `10 cases that work', especially since in most of them the polar terms
do not need to satisfy any constraints to generate a modular form.
However, there are three observations in favor of our results:
\begin{itemize}
\item
As indicated above, they reproduce known results for models already studied in the literature.

\item
For $X_{3,3}$ and $X_{4,4}$, there are in fact modular constraints on polar coefficients,
which turn out to be satisfied by our Ansatz, thanks to uncanny relations between GV
invariants (see \eqref{uncanny1} and \eqref{uncanny2}).

\item
All solutions satisfy the condition of having integer Fourier coefficients, which was not guaranteed at all.

\end{itemize}
This provides some evidence that our generating series may be correct.
However, of course, it leaves open the question as to why and how our ansatz should
be modified in the remaining three cases and, in particular, whether the minimal modification that we propose is indeed correct.

\section{BPS indices for D4-brane charge 2}
\label{sec-DT2}

In this section we go beyond the rank one case and explain how to fix the generating functions $h_{2,\mu}$,
assuming that $h_{1,\mu}$ have been previously determined.
In \S\ref{subsec-strategy} we present our general strategy and in \S\ref{subsec-explicit} we provide an explicit algorithm.
Unfortunately, our lack of control on the polar terms does not allow us to implement this algorithm successfully
in concrete examples.

\subsection{General strategy}
\label{subsec-strategy}

The generating functions $h_{2,\mu}$ are VV mixed mock modular forms with completion $\whh_{2,\mu}$ given by \eqref{whh2}.
Such functions are not uniquely specified by the polar part,
unless one also specifies the shadow determining the modular anomaly, or equivalently the holomorphic anomaly of its completion.
The latter being an inhomogeneous linear equation,
its general solution is a sum of a particular solution and a solution of the corresponding homogeneous equation.
In our case the homogeneous solution is nothing but a genuine VV modular form.

This observation suggests the following method to reconstruct the generating function from its polar part.
First, we need to produce a function $\han_{2,\mu}$ that has the same modular anomaly as $h_{2,\mu}$, which will play
the role of the particular solution for the modular anomaly equation. Then the full generating function
is a sum of $\han_{2,\mu}$ and a VV modular form $\hh_{2,\mu}$,
\be
h_{2,\mu}=\han_{2,\mu}+\hh_{2,\mu}.
\label{hhh}
\ee
At the second step, this unknown modular form can be determined by its polar terms which are obtained as a difference
of the polar terms of $h_{2,\mu}$ (to be determined independently) and the polar terms of $\han_{2,\mu}$
(which can be read off from its explicit expression).

Thus, leaving aside the issue of fixing the polar part of $h_{2,\mu}$, which we do not attempt
in this paper,
the problem reduces to finding a function $\han_{2,\mu}$ with the anomaly determined by the shadow of $h_{2,\mu}$,
which in turn can be derived from the holomorphic anomaly of its completion $\whh_{2,\mu}$.
Since the anomalous term in \eqref{whh2} has a factorized form, it is natural
to look for $\han_{2,\mu}$ of the same form, namely
\be
\han_{2,\mu}=\sum_{\mu_1,\mu_2=0}^{\kappa-1}g_{2,\mu,\mu_1,\mu_2}\, h_{1,\mu_1}\,h_{1,\mu_2},
\label{defnorm}
\ee
where $h_{1,\mu}$ are the generating functions considered in the previous section.
The `normalized functions' $g_{2,\mu,\mu_1,\mu_2}$ should be chosen such that their completions defined by
\be
\whg_{2,\mu,\mu_1,\mu_2}=g_{2,\mu,\mu_1,\mu_2}+R_{\mu,\mu_1\mu_2},
\ee
where $R_{\mu,\mu_1\mu_2}$ is the same function \eqref{defR} that appears in the expression for $\whh_{2,\mu}$,
must transform as VV modular forms of weight 3/2 and the following multiplier system
\be
\begin{split}
M_{\mu,\mu_1,\mu_2;\nu,\nu_1,\nu_2}(T)
=&\,
e^{\pi\I\( \frac{1}{\kappa}(\hf\, \mu^2-\mu_1^2-\mu_2^2)-\mu_1-\mu_2 -\frac{\kappa}{2}\)}
\,\delta_{\mu\nu}\,\delta_{\mu_1\nu_1}\,\delta_{\mu_2\nu_2},
\\
M_{\mu,\mu_1,\mu_2;\nu,\nu_1,\nu_2}(S)
=&\, \frac{\sqrt{\I}(-1)^\kappa}{\sqrt{2}\kappa^{3/2}}\,
e^{\frac{2\pi\I}{\kappa}\(\mu_1\nu_1+\mu_2 \nu_2- \hf\,\mu \nu\)}.
\end{split}
\label{STref}
\ee

Furthermore, the function $R_{\mu,\mu_1\mu_2}$ encoding the anomaly is also of the special form \eqref{defR}
so that it is expressed through a vector like object.
The fact that $g_{2,\mu,\mu_1,\mu_2}$ can be taken in the same form is established by the following proposition:
\begin{proposition}
If $\Gi{\kappa}_\mu$ ($\mu=0,\dots,2\kappa-1$) transforms with the multiplier system
\be
\begin{split}
\Mi{\kappa}_{\mu\nu}(T)
=&\,
e^{-\frac{\pi\I}{2\kappa}\, \mu^2}\,\delta_{\mu\nu},
\\
\Mi{\kappa}_{\mu\nu}(S)
=&\, \frac{\sqrt{\I}}{\sqrt{2\kappa}}\,
e^{\frac{\pi\I}{\kappa}\,\mu \nu},
\end{split}
\label{STusual}
\ee
then
\be
g_{2,\mu,\mu_1,\mu_2}=\delta^{(\kappa)}_{\mu_1+\mu_2-\mu}(-1)^{\mu'} \Gi{\kappa}_{\mu'}
\label{ansatz-g}
\ee
transforms with the multiplier system \eqref{STref}.
\end{proposition}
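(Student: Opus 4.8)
\emph{Proof proposal.} Since the prefactors $\delta^{(\kappa)}_{\mu_1+\mu_2-\mu}$ and $(-1)^{\mu'}$ in \eqref{ansatz-g} (with $\mu'=\mu-2\mu_1+\kappa$) are inert under $SL(2,\IZ)$, the claim is a bookkeeping exercise: transport the known multiplier system \eqref{STusual} of $\Gi{\kappa}_{\mu'}$ through the ansatz and match it against \eqref{STref}. It suffices to check the two generators $T$ and $S$; recall that $\mu\in\IZ_{2\kappa}$ whereas $\mu_1,\mu_2\in\IZ_\kappa$.

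For $T$, the only phase is $\Gi{\kappa}_{\mu'}\mapsto e^{-\frac{\pi\I}{2\kappa}\mu'^2}\Gi{\kappa}_{\mu'}$. Writing $\mu=\mu_1+\mu_2+\kappa m$ (possible on the support of the Kronecker delta) and using $\mu'=\mu_2-\mu_1+\kappa(m+1)$, one finds that $-\frac{1}{2\kappa}\mu'^2$ differs from the argument of the $T$-phase in \eqref{STref} by $2\mu_1-2m\mu_2-\kappa\,m(m+1)$, which is even since $m(m+1)$ is; hence $e^{-\frac{\pi\I}{2\kappa}\mu'^2}$ reproduces the $T$-entry of \eqref{STref}.

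For $S$, I would compare two expressions. On one side, the actual $S$-transform of $g_{2,\mu,\mu_1,\mu_2}$: by \eqref{ansatz-g} and \eqref{STusual} its $\Gi{\kappa}_{\bar\nu}$-coefficient is $\delta^{(\kappa)}_{\mu_1+\mu_2-\mu}(-1)^{\mu'}\frac{\sqrt{\I}}{\sqrt{2\kappa}}\,e^{\frac{\pi\I}{\kappa}\mu'\bar\nu}$. On the other side, the combination $\sum_{\nu,\nu_1,\nu_2} M(S)_{\mu,\mu_1,\mu_2;\nu,\nu_1,\nu_2}\,g_{2,\nu,\nu_1,\nu_2}$ dictated by \eqref{STref}, into which I substitute \eqref{ansatz-g}: the constraints $\nu_1+\nu_2\equiv\nu$ and $\bar\nu\equiv\nu-2\nu_1+\kappa\pmod{2\kappa}$ force $\nu\equiv\bar\nu+2\nu_1-\kappa\pmod{2\kappa}$ and $\nu_2\equiv\bar\nu+\nu_1\pmod{\kappa}$, so the triple sum collapses to a sum over $\nu_1\in\{0,\dots,\kappa-1\}$. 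Collecting the $\nu_1$-dependence of $e^{\frac{2\pi\I}{\kappa}(\mu_1\nu_1+\mu_2\nu_2-\hf\mu\nu)}$ isolates $e^{\frac{2\pi\I}{\kappa}\nu_1(\mu_1+\mu_2-\mu)}$, whose sum over $\nu_1$ yields $\kappa\,\delta^{(\kappa)}_{\mu_1+\mu_2-\mu}$; the prefactor becomes $\frac{\sqrt{\I}(-1)^\kappa}{\sqrt{2}\,\kappa^{3/2}}\cdot\kappa=\frac{\sqrt{\I}(-1)^\kappa}{\sqrt{2\kappa}}$, and the sign $(-1)^\kappa$ is precisely what sits inside $(-1)^{\mu'}=(-1)^\mu(-1)^\kappa$. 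What remains is the identity of leftover phases: writing $e^{\frac{\pi\I}{\kappa}\mu'\bar\nu}=(-1)^{\bar\nu}e^{\frac{\pi\I}{\kappa}(\mu-2\mu_1)\bar\nu}$, the two coefficients agree iff $e^{\frac{\pi\I}{\kappa}\bar\nu(2\mu_2-\mu)}=e^{\frac{\pi\I}{\kappa}\bar\nu(\mu-2\mu_1)}$, i.e. iff $2\bar\nu(\mu_1+\mu_2-\mu)\equiv0\pmod{2\kappa}$, which holds on the support of $\delta^{(\kappa)}_{\mu_1+\mu_2-\mu}$. This matches the $S$-entry of \eqref{STref} and finishes the proof.

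The only step requiring genuine care — and hence the main obstacle — is the mismatch of index ranges ($\IZ_{2\kappa}$ for $\mu,\bar\nu$ versus $\IZ_\kappa$ for $\mu_1,\mu_2,\nu_1,\nu_2$): one must verify at each step that the phases are well defined modulo the appropriate integer (e.g. $e^{-\frac{\pi\I}{\kappa}\mu\nu}$ is insensitive to $\nu\mapsto\nu+2\kappa$, while $(-1)^{\mu'}$ and $\delta^{(\kappa)}$ are insensitive to $\mu_1\mapsto\mu_1+\kappa$), and that the reparametrization $(\nu,\nu_1,\nu_2)\mapsto(\nu_1,\bar\nu)$ is a bijection of the constrained summation set. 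Once this is in place, each generator reduces to a one-line Gauss sum.
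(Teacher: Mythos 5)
Your proposal is correct and follows essentially the same route as the paper's own proof: the $T$-check is a direct phase comparison on the support of $\delta^{(\kappa)}_{\mu_1+\mu_2-\mu}$, and the $S$-check applies the matrix \eqref{STref} to the Ansatz \eqref{ansatz-g}, collapses the constrained triple sum to a single sum whose geometric sum over $\nu_1$ produces $\kappa\,\delta^{(\kappa)}_{\mu_1+\mu_2-\mu}$, and matches the leftover phase with the $S$-transform of $\Gi{\kappa}_{\mu'}$ dictated by \eqref{STusual} (the paper eliminates $\nu_2$ first and then shifts $\nu\to\nu'$, which is the same bookkeeping in a slightly different order). The only cosmetic gloss is that you do not spell out that the residual factor $(-1)^{\mu}$ on the transformed side arises from the $e^{\pi\I\mu}$ term generated by the substitution $\nu=\bar\nu+2\nu_1-\kappa$, but your final matching condition is the correct one, so there is no gap.
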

\begin{proof}
Let us verify the T-transformation. Taking into account the $\delta$-symbol in \eqref{ansatz-g},
the phase factor required to be produced by this transformation from \eqref{STref} is found to be (here $\lambda\in \IZ$)
\be
\begin{split}
&\,
e^{\pi\I\( \frac{1}{\kappa}(\hf\, \mu^2-\mu_1^2-\mu_2^2)-\mu_1-\mu_2 -\frac{\kappa}{2}\)}\stackrel{\mu_1+\mu_2=\mu+\lambda\kappa}{=}
e^{\pi\I\( \frac{1}{\kappa}\(\hf\, \mu^2-\mu_1^2-(\mu-\mu_1+\lambda\kappa)^2\)-\mu -\lambda\kappa-\frac{\kappa}{2}\)}
\\
=&\,
e^{-\pi\I \kappa(\lambda+\lambda^2)}\,
e^{-\frac{\pi\I}{2\kappa}\,(\mu-2\mu_1+\kappa)^2}=e^{-\frac{\pi\I}{2\kappa}\,\mu'^2},
\end{split}
\ee
which indeed coincides with the phase factor in \eqref{STusual}.

Similarly, by computing the Fourier transform implied by \eqref{STref}, one finds
\bea
&&
\frac{\sqrt{\I}(-1)^\kappa}{\kappa\sqrt{2\kappa}}\sum_{\nu,\nu_1,\nu_2} e^{\frac{\pi\I}{\kappa}\(-\mu\nu+2\mu_1\nu_1+2\mu_2\nu_2\)}
\,\delta^{(\kappa)}_{\nu_1+\nu_2-\nu}\,(-1)^{ \nu'}\, \Gi{\kappa}_{\nu'}
\nn\\
&=& \frac{\sqrt{\I}(-1)^\kappa}{\kappa\sqrt{2\kappa}}\,\sum_{\nu,\nu_1}e^{\frac{\pi\I}{\kappa}\((2\mu_2-\mu)\nu+2(\mu_1-\mu_2)\nu_1\)}
\,(-1)^{ \nu'}\, \Gi{\kappa}_{\nu'}
\nn\\
&=& \frac{\sqrt{\I}}{\kappa\sqrt{2\kappa}}\,\sum_{\nu',\nu_1}
e^{\frac{\pi\I}{\kappa}(2\mu_2-\mu)(\nu'-\kappa)+\frac{2\pi\I}{\kappa}(\mu_1+\mu_2-\mu)\nu_1}
\,(-1)^{ \nu'-\kappa}\,\Gi{\kappa}_{\nu'}
\nn\\
&=&\delta^{(\kappa)}_{\mu_1+\mu_2-\mu}\,(-1)^{\mu'}
\, \frac{\sqrt{\I}}{\sqrt{2\kappa}}\sum_{\nu'}e^{\frac{\pi\I}{\kappa}\,\mu'\nu'}\, \Gi{\kappa}_{\nu'}.
\label{Fthetamu}
\eea
This result is perfectly consistent with the S-duality transformation of $\Gi{\kappa}_\mu$ implied by \eqref{STusual}.
\end{proof}

Due to this proposition, choosing the functions $g_{2,\mu,\mu_1,\mu_2}$ in the form \eqref{ansatz-g},
we finally reduce the problem to finding a VV mock modular form $\Gi{\kappa}_\mu$ such that its completion,
transforming with weight $3/2$ and multiplier system \eqref{STusual}, is given by
\be
\whGi{\kappa}_\mu=\Gi{\kappa}_\mu+\Thi{\kappa}_\mu,
\label{relGTh}
\ee
where $\Thi{\kappa}_\mu$ is defined in \eqref{defRmu}.
The original generating function is then obtained by substituting \eqref{defnorm} and \eqref{ansatz-g} into \eqref{hhh}
leading to
\be
\begin{split}
h_{2,\mu}
=&\, \hh_{2,\mu}+\sum_{\mu_1=0}^{\kappa-1}
(-1)^{\mu-2\mu_1+\kappa} \Gi{\kappa}_{\mu-2\mu_1+\kappa}\, h_{1,\mu_1}\,h_{1,\mu-\mu_1}\,.
\end{split}
\label{defnorm-G}
\ee
The holomorphic ambiguity $\hh_{2,\mu}$ can be fixed by matching polar terms.

\subsection{Explicit construction}
\label{subsec-explicit}

The upshot  of the previous subsection is that we reduced the problem of finding a VV {\it mixed} mock modular form,
with a modular anomaly depending on the generating functions for $\Nr=1$,
to a similar problem for the usual VV mock modular form $\Gi{\kappa}_\mu$ with an anomaly specified by $\Thi{\kappa}_\mu$ \eqref{defRmu}.
This is a much simpler problem which we can actually solve using known results in the literature.

The key observation is that the shadow \eqref{holanom-Th} of the completion $\whGi{\kappa}_\mu$ is,
up to a trivial factor $\tau_2^{-3/2}$, the complex conjugate of a simple unary theta series.
Furthermore, for $\kappa=1$, it is identical (up to a factor of 3) with the shadow
of the generating series  of Hurwitz class numbers, which (not coincidentally) appears
in the context of rank 2 Vafa-Witten invariants
on $\IP^2$  \cite[Eq.(4.32)]{Vafa:1994tf}\footnote{The connection between Hurwitz class numbers
and moduli spaces of rank 2 semi-stable sheaves on
$\IP^2$ was derived earlier in the mathematics literature \cite{Klyachko:1991,Yoshioka:1994},
and the mock modular properties of the corresponding generating series were established
in \cite{Zagier:1975,hirzebruch1976intersection}.}.
Thus, for $\kappa=1$ we can simply choose
\be
\Gi{1}_\mu=H_\mu,
\label{solG1}
\ee
where $H_\mu$ is the standard (doublet of) generating series of Hurwitz class numbers,
which starts with the following coefficients
\be
\label{H01}
\begin{split}
H_0(\tau) =&\, -\frac{1}{12} + \frac{\q}{2}+\q^2+\frac{4 \q^3}{3}+\frac{3 \q^4}{2}+2 \q^5+2 \q^6+2 \q^7+3 \q^8+\frac{5
\q^9}{2}+2 \q^{10}
+\dots\, ,
\\
H_1(\tau) =&\, \q^{\frac34} \left( \frac{1}{3}+\q+\q^2+2
\q^3+\q^4+3 \q^5+\frac{4 \q^6}{3}+3 \q^7+2 \q^8+4 \q^9+\q^{10}
 + \dots \right).
\end{split}
\ee

In order to upgrade this solution to $\kappa>1$, we need an operator acting on VV modular forms which
i)  preserves their weight but increases the dimension of the vector space in which they are valued,
in particular, mapping the multiplier system $\Mi{1}$ to $\Mi{\kappa}$ \eqref{STusual}, and
ii) maps $\Thi{1}$ to $\Thi{\kappa}$.
In Appendix \S\ref{sec-Hecke}, we show that when $\kappa$ is a prime number, these properties are satisfied by
a generalized Hecke operator $\cT_\kappa$ introduced in \cite{Bouchard:2016lfg,Bouchard:2018pem}.
Unfortunately, for $\kappa$ non-prime
it fails to satisfy the second property that ensures that the anomalies are properly matched.
Nevertheless, when $\kappa$ is a power of a prime number\footnote{Out of the list of 13 CICY,
this rules out $X_{4,3}$ and $X_{3,2,2}$, for which the construction of $\cT'_\kappa$ is left as an open problem.},
it is possible to cure the problem and modify $\cT_\kappa$ into an operator
$\cT'_\kappa$ such that
\be
\Gi{\kappa}_\mu=(\cT'_\kappa[H])_\mu.
\label{GkapH}
\ee
Substituting \eqref{GkapH} into \eqref{defnorm-G},
we finally arrive at the following representation for the generating functions
\be
h_{2,\mu}= \hh_{2,\mu}+ \sum_{\mu_1=0}^{\kappa-1}
(-1)^{\mu-2\mu_1+\kappa} (\cT'_\kappa[H])_{\mu-2\mu_1+\kappa}\, h_{1,\mu_1}\,h_{1,\mu-\mu_1},
\label{genfun-Hecke-p}
\ee
where the action of $\cT'_\kappa$ is defined by \eqref{defHecke-exp} and \eqref{defTp}. Thus,
we only need to fix the holomorphic modular ambiguity $\hh_{2,\mu}$, which can be determined
from its polar part.

Let us assume that we know the  polar part of the generating series of  {\it integer} DT-invariants
\be
\hint_{2,\mu}=\sum_{\hq_0\ge 0}\Omega_{2,\mu}(\hq_0)\, \q^{-\hq_0},
\label{polarh2}
\ee
namely all integer coefficients $\Omega_{p,\mu}(\hq_0)$ for $p=2$ and $\hq_0>0$.
The generating series \eqref{polarh2} differs from the generating function
$h_{2,\mu}$ of rational DT-invariants \eqref{defntilde} due to
the contribution of non-primitive charges representable as $\gamma=2\gamma'$.
Since the general form of the charges with $p=2$ and $p=1$ is
\be
\gamma=(0,2,2\kappa\eps +\mu+2\kappa,q_0),
\qquad
\gamma'=\Bigl(0,1,\kappa\eps+\mu'+\frac{\kappa}{2}\,,q'_0\Bigr),
\ee
one must have
\be
\mu'=\hf\,(\mu+\kappa) \in\IZ\, ,
\qquad q_0'=\hf\, q_0\in\IZ\, .
\ee
Therefore, the relation between the generating functions of rational and integer BPS indices reads
\be
h_{2,\mu}(\tau)=\hint_{2,\mu}(\tau)+\frac14\, \delta^{(2)}_{\mu+\kappa}\, h_{1,\frac{\mu+\kappa}{2}}(2\tau).
\label{polparth}
\ee
Substituting this relation into \eqref{genfun-Hecke-p}, we find that the polar part of the holomorphic ambiguity
$\hh_{2,\mu}$ is given by the polar part of
\be
\begin{split}
\hint_{2,\mu}(\tau)- \sum_{\mu_1=0}^{\kappa-1}
(-1)^{\mu-2\mu_1+\kappa} (\cT'_\kappa[H])_{\mu-2\mu_1+\kappa}(\tau)\, h_{1,\mu_1}(\tau)\,h_{1,\mu-\mu_1}(\tau)
+ \frac14\, \delta^{(2)}_{\mu+\kappa}\, h_{1,\frac{\mu+\kappa}{2}}(2\tau),
\end{split}
\ee
which is entirely determined by the polar part of $\hint_{2,\mu}$ (which serves as input)
and by $h_{1,\mu}$ (which by assumption has been previously determined). Assuming that a VV modular form  $\hh_{2,\mu}$
with the required polar part exists, we can plug it into \eqref{genfun-Hecke-p} to obtain
the generating functions $h_{2,\mu}$ of rational DT invariants, and
finally obtain the generating functions $\hint_{2,\mu}$ of integer DT invariants
 via \eqref{polparth}.
If no such VV modular form $\hh_{2,\mu}$ exists, or if  the Fourier coefficients of $\hint_{2,\mu}$
turn out to not be integer,
one must conclude that the proposed polar part is incorrect,
or that a mistake has been made in the previous step of determining $h_{1,\mu}$.

\subsection{A naive attempt \label{subsec-polar2}}

Given our partial success at rank 1, it is natural to extend the Ansatz \eqref{hpolar}
to higher D4-brane charge, by keeping only contributions from a single \DDb pair
(i.e. $N=1$) with $\Nr=\Nr_1+\Nr_2>1$ units of flux.
Then the same reasoning as in \S\ref{subsec-polar1} leads to the proposal
\be
\label{hpolargenr}
\hpol_{\Nr,\mu} \stackrel{?}{=} \q^{-\frac{\chi(\Nr\cD)}{24}+\frac{\mu^2}{2\Nr\kappa}+\frac{\Nr\mu}{2}}
\sum_{n\in \IZ\; :\; \hq_0>0} (-1)^{n+\Nr\mu+\cI_\Nr+1}(\cI_\Nr-\Nr \mu-n)\, DT(\mu,n)\, \q^n.
\ee
Unfortunately, setting $\Nr=2$, restricting to the 9 models for which the rank 1 invariants
had been determined and $\kappa$ is a power of a prime number,
and applying the algorithm outlined in the previous subsection, we find that no solution
$\hh_{2,\mu}$ with the required polar terms exists whenever the polar part is constrained
(i.e. $C_2>0$ in Table \ref{table1}), or that the solution does not lead to integer coefficients in $\hint_{2,\mu}$.
This suggests that the Ansatz \eqref{hpolargenr} misses some contributions, as we discuss in the next Section.

\section{Discussion}
\label{sec-disc}

In this paper we used modular properties of the generating series of D4-D2-D0 BPS indices
to determine these functions explicitly in the case of compact CY threefolds with $b_2(\CY)=1$.
In this case, the generating functions depend on one positive integer $r$ ---
the wrapping number of D4-brane along the primitive divisor,  or D4-brane charge for short.
For $\Nr=1$, when the generating functions are VV modular forms, we proposed an Ansatz \eqref{hpolar} for their polar terms,
which generalizes the known results in the literature \cite{Gaiotto:2006wm,Gaiotto:2007cd,Collinucci:2008ht,VanHerck:2009ww}.
It allowed us to produce the generating functions $h_{1,\mu}$ for 10 out of the list of
13 one-parameter CICY threefolds.

For $\Nr=2$, when the generating functions are VV mixed mock modular forms, we constructed an
explicit solution to the corresponding modular anomaly equation
by applying a suitable Hecke operator on the generating function of Hurwitz class numbers,
which arises in the similar problem of rank 2 Vafa-Witten invariants on $\IP^2$.
This determines $h_{2,\mu}$ up to a holomorphic VV modular form which is supposed to be fixed by the polar terms.
In principle, the same strategy would also work for $\Nr>2$, using the rank $\Nr$
VW invariants on  $\IP^2$ determined in \cite{Manschot:2010nc,Manschot:2011ym,Manschot:2014cca}
(see also \cite{Alexandrov:2020dyy}) as a starting point, although the construction
of a solution to the modular anomaly equation is likely to be more complicated.
However, already for $\Nr=2$, we found that the naive extension \eqref{hpolargenr}
of the Ansatz \eqref{hpolar} does not work. The determination of the correct polar terms (both for rank 1
and higher) is therefore the main open problem for future investigations.

Without trying to solve this problem here, let us discuss the possible origin
of the contributions that are missed by the naive Ansatz \eqref{hpolargenr}.
Firstly, for $\Nr>1$ it is natural to expect that contributions from
$N$ D6-$\overline{\rm D6}$ pairs with $1\le N \le \Nr$ may become relevant.
Some of these contributions can be easily deduced from the computation presented in \S\ref{subsec-polar1}
by combining equations \eqref{primWC}, \eqref{hqD4} and \eqref{DiracD6D6}. As in the $N=1$ case,
the BPS indices for each of the two constituents can then be related to rank $N$ Donaldson-Thomas invariants.
Those are in principle determined by  rank 1 DT invariants \cite{Feyzbakhsh:2021nds,Feyzbakhsh:2022ydn},
although it may be difficult to determine them in practice. It is also
possible that more complicated bound states need to be taken into account where D4-brane charge
is not generated by the spectral flow as in \eqref{chargeD6D4}, but at least partially produced by D4-flux on a D6-brane.
Then the BPS indices of the constituents are given by generalized DT invariants which are rarely known explicitly.
An even more complicated scenario would involve contributions from bound states with multiple constituents,
for example, one with two  units of D6-branes and two with a single $\overline{\rm D6}$-brane.
In that case, it would be difficult to produce any general Ansatz and we would have to rely on a case by case analysis.

Second, despite some success, our Ansatz for $\Nr=1$ also needs confirmation and improvement,
as there are three CICY threefolds  for which it fails to produce a modular form.
This sheds doubt on its validity in other cases where it does produce a plausible result but is weakly constrained by modularity.
We would like to put forward a few observations pointing to possible resolutions:
\begin{itemize}
\item
The three offending cases correspond to one-parameter families with a singularity at $\psi=\infty$
of type C or M in the terminology of \cite{Joshi:2019nzi}, corresponding to a conifold singularity
at finite distance (in addition to the conifold singularity at $\psi=1$, which is common to all models),
or a maximal unipotent monodromy at infinite distance  (in addition to the large volume point at $\psi=0$,
common to all models). It is conceivable that such singularities give rise to new constituents
analogous to the \DDb bound states which could contribute to polar terms. In this respect, it is worth noting that
the Ansatz \eqref{hpolar} seems to work for
the models $X_{k,k}$ with $k=3,4,6$ having a $K$-type singularity at infinite distance.

\item
In Assumption \ref{conj1}, we assumed that D2 and D0-branes can only bind to the  D6 or
$\overline{\rm D6}$-brane, depending on the sign of $\mu$. As reviewed in Appendix \ref{sec_math},
the mathematical results of \cite{Toda:2011aa,Feyzbakhsh:2022ydn} indicate that this is not true in general,
and D2 and D0-branes may bind to both the D6 and $\overline{\rm D6}$-brane,
leading to terms quadratic in DT-invariants.

\item
Even in cases where Assumption \ref{conj1} is valid, Assumption \ref{conj2} may fail,
in the sense that the BPS indices of
the constituents might differ from the rank 1 DT invariants due to wall-crossing between
the large volume point and the point on the wall of marginal stability
at which they are to be evaluated. This is corroborated by
the fact that both Donaldson-Thomas and Pandharipande-Thomas invariants
enter in the mathematical results of \cite{Toda:2011aa,Feyzbakhsh:2022ydn}.

\item
As discussed in  \S\ref{sec-genfun1}, one may  modify the polar coefficients in an
{\it ad hoc} way so as to  produce a modular form with integer coefficients
for the three CYs where the original Ansatz fails.
An intriguing observation is that it suffices to modify only those coefficients that
correspond to {\it negative} D0-brane charge $n<0$.
Furthermore, the only other case where non-vanishing polar coefficients with $n<0$ arise
is the leading polar term in $h_{1,2}$ for $X_{6,2}$ (see \eqref{exp-hX62}),
but that coefficient can be changed without affecting modularity.
Thus, it might be that  contributions from bound states with $n<0$ need to be treated differently.
If so, this would also explain why our Ansatz works in all other cases where negative $n$ does not appear.

\end{itemize}

Finally, it might happen that the contributions from multiple D6-$\overline{\rm D6}$
pairs discussed above are also relevant for $\Nr=1$.
In that case, a careful analysis of multi-centered configurations of D6 and $\overline{\rm D6}$
(potentially including scaling solutions) will be needed,
and the modular generating series recorded in Appendix \ref{sec-genfun1} cannot be trusted.
We hope to return to the analysis of the polar coefficients in future work.

\section*{Acknowledgements}

We are grateful to  Frederik Denef, Charles Doran, Soheyla Feyzbakhsh and Albrecht Klemm for useful correspondence,
and to Aradhita Chattopadhyaya and Thorsten Schimannek for discussions.
SA, NG and BP appreciate the hospitality of the Hamilton
Institute at Trinity College Dublin for hospitality during the course of this work.
This collaboration was supported by a Ulysses Award from the Irish Research Council
and French Ministry of Europe and Foreign Affairs. The research of NG is supported
by the Delta-Institute for Theoretical Physics (D-ITP) that is funded by the Dutch Ministry of Education,
Culture and Science (OCW). The research of JM is
supported by Laureate Award 15175 ``Modularity in Quantum Field Theory
and Gravity'' of the Irish Research Council. The research of BP is supported by Agence Nationale de la Recherche
under contract number ANR-21-CE31-0021.

\appendix

\section{Polar terms of vector valued modular forms}
\label{sec-constraints}

In this section, we determine the dimension of the space $\scM_\Nr(\CY)$
of weakly holomorphic VV modular forms  with Fourier expansion of the form
\be
\hh_{r,\mu} = \sum_{n\geq 0}  c_{\mu}(n) \,\q^{n-\Delta_\mu}
\ee
with exponents $\Delta_\mu$ specified in \eqref{Deltamu},
 transforming with weight $-3/2$ and multiplier system \eqref{multsys-h2} under
 $\tau\mapsto \frac{a\tau+b}{c\tau+d}$. Equivalently, we determine the number of linear constraints that the polar
 coefficients $c_\mu(n)$ with $n<\Delta_\mu$ must satisfy, in order to correspond to an element
$\hh_{r,\mu}\in \scM_\Nr(\CY)$.
For $r=1$, $\hh_{1,\mu}$ coincides with the generating series of DT invariants $h_{1,\mu}$,
whereas for $r>1$, it corresponds to the holomorphic ambiguity, as explained in \S\ref{sec-DT2}.
Throughout this section, we set $m=\kappa\,\Nr$, providing the normalization of the
quadratic form on the relevant lattice.

\subsection{Number of polar terms}

Since $\hh_{r,\mu}$ is by assumption invariant under $\mu\mapsto\mu+m$ and $\mu\mapsto -\mu$,
it consists of $d=\lceil \frac{m+1}{2} \rceil$ independent components. The number of polar terms is
therefore given by
\be
n_r(\CY)= \sum_{\mu=0}^{d-1 } \lceil \Delta_\mu \rceil.
\ee
It will be useful to rewrite this formula  as
\be
n_r(\CY)= -\frac{1}{2}\, \cI(M)+\sum_{\mu=0}^{d-1 }  \(\Delta_\mu+\frac{1}{2}-((\Delta_\mu))\),
\label{numpolterms}
\ee
where $\cI(M)$ is the number of exponents $\Delta_\mu$ which are integer, and
$((\,\cdot\,))$ is defined by
\begin{equation}
((x))=x-\frac{\lceil x \rceil + \lfloor x
  \rfloor}{2}=\left\{\begin{array}{ll} \xi-\frac{1}{2},\qquad  & {\rm
      if}\,\, x=\xi+\mathbb{Z},\,\, 0<\xi<1,\\ 0, & {\rm
      if}\,\, x\in \mathbb{Z}. \end{array} \right.
\end{equation}

\subsection{Number of constraints}

The constraints on polar terms of a weakly holomorphic modular form of weight $w$
originate from holomorphic cusp forms of dual weight $2-w$~\cite{Skoruppa85,Bantay:2007zz,Manschot:2008zb}.
To obtain the dimension of this space, hence the number of constraints,
one uses Selberg's trace formula, which gives the
difference of the dimension of the space $\cM_{w}(M)$ of VV holomorphic modular forms of weight $w$ and multiplier system
$M_{\mu\nu}$ and the dimension of the space $\cS_{2-w}(\bM)$ of cusp forms with complex conjugate multiplier system.
The trace formula gives \cite{Skoruppa85}
\begin{equation}
\label{dimCuspM}
{\rm dim}\!\left[\cS_{2-w}(\bM) \right]-{\rm dim}\!\[\cM_{w}(M)\]=A_{\rm s} + A_{\rm e}+A_{\rm p},
\end{equation}
where $A_{\rm s}$, $A_{\rm e}$ and $A_{\rm p}$ are the scalar, elliptic and parabolic contributions which are given by
\begin{equation}
\label{defAsep}
\begin{split}
A_{\rm s}=&\, \frac{1-w}{12}\,\chi_M(1),
\\
A_{\rm e}=&\, -\frac{1}{4}\,{\rm Re}\!\left[ e^{\frac{\pi\I w}{2}} \,\chi_M (S)\right]
+\frac{2}{3\sqrt{3}}\,{\rm Re}\!\left[ e^{\frac{\pi\I}{6}\,(2w-5)}\,\chi_M (ST)\right],
\\
A_{\rm p}=&\, -\frac{1}{2}\, \cI(M)-\sum_{\mu=0}^{d-1} ((\Delta_\mu)).
\end{split}
\end{equation}
Here, $\chi_M(g)$ denotes the character $\Tr M(g)$  of the action of $g\in SL(2,\IZ)$
on the $d$-dimensional vector space of components. Since the relevant weight $w=-3/2$ is negative,
the space $\cM_{-3/2}(M)$ of holomorphic VV modular forms is empty, and the r.h.s. of \eqref{dimCuspM}
gives directly the number of constraints on polar terms. Since
$\chi_M(1)=d$, it remains only to evaluate the elliptic contribution $A_{\rm e}$.

To this end, we introduce the Gauss sums $G(n,m)$
\be
G(n,m)=\sum_{\nu=1}^{m} e^{\frac{2\pi i n \nu^2}{m}}.
\ee
Recasting the $m\times m$ matrix $M_{\mu\nu}(S)$  in  \eqref{multsys-h2} into a
$d\times d$ matrix, we obtain
\be
\chi_M(S)=
\frac{(-1)^{\cI_\Nr}}{\sqrt{\I m}}
\begin{cases}
1-e^{-\frac{\pi \I m}{2}}
+\sum_{\nu=1}^{m/2} \(e^{\frac{2\pi \I  \nu^2}{m}}+e^{-\frac{2\pi \I  \nu^2}{m}}\), & m\ {\rm even},
\\
1+\sum_{\nu=1}^{(m-1)/2}\( e^{\frac{2\pi \I  \nu^2}{m}}+e^{-\frac{2\pi \I  \nu^2}{m}}\), & m \ {\rm odd},
\end{cases}
\ee
with $\cI_\Nr$ defined in (\ref{chiODN}). In either case, we find
\be
\chi_M(S)=
\frac{(-1)^{\cI_\Nr}}{2\sqrt{\I m}}\,\Bigl( G(1,m)+G(1,m)^*\Bigr).
\ee
Using the well-known values for the Gauss sum $G(1,m)$ \cite{Apostol1976},
\be
\label{G1m}
\begin{split}
G(1,m)=\left\{ \begin{array}{ll}
(1+\I )\sqrt{m}, & m=0\!\!\mod 4,
\\
\sqrt{m}, & m=1\!\!\mod 4,
\\
0, & m=2\!\!\mod 4,
\\
\I \sqrt{m}, & m=3\!\!\mod 4,
\end{array}\right.
\end{split}
\ee
we arrive at
\begin{equation}
\chi_M(S)=\left\{ \begin{array}{ll}
(-1)^{\cI_\Nr} e^{-\frac{\pi\I}{4}},\quad & m=0,1\!\!\mod 4,
\\
0, & m=2,3\!\!\mod 4.
\end{array} \right.
\label{chiS}
\end{equation}

Turning to $\chi_M(ST)$, we have
\be
\chi_M(ST)=\frac{(-1)^{\cI_\Nr}}{\sqrt{\I m}}\, e^{\frac{\pi\I}{4}\,mr^2 +\frac{\pi\I}{12}\, rc_2}
\begin{cases}
1-(-1)^{\frac{m\Nr}{2}} e^{-\frac{\pi \I  m}{4}}
+\sum_{\nu=1}^{m/2} (-1)^{\Nr\nu}\( e^{-\frac{\pi \I  \nu^2}{m}}
+e^{\frac{3\pi \I  \nu^2}{m}}\),  & m \ {\rm even},
\\
1 +\sum_{\nu=1}^{(m-1)/2} (-1)^{\Nr\nu }\( e^{-\frac{\pi \I  \nu^2}{m}}+e^{\frac{3\pi \I \nu^2}{m}} \),
& m \ {\rm odd}.
\end{cases}
\ee
For $m$ even, we can further simplify this to
\begin{equation}
\label{chiSTeven0}
\begin{split}
\chi_M(ST)&=\frac{e^{\frac{\pi\I}{3}\,mr^2}}{2\sqrt{\I m}}\sum_{\nu=1}^{m}
\(e^{-\frac{\pi \I  \nu^2}{m}}+e^{\frac{3\pi \I  \nu^2}{m}}\)
=\frac{e^{\frac{\pi\I}{3}\,mr^2}}{4\sqrt{\I m}}
\sum_{\nu=1}^{2m}\( e^{-\frac{2\pi \I\nu^2}{2m}}+e^{\frac{6\pi \I\nu^2}{2m}}\)
\\
&=\frac{e^{\frac{\pi\I}{3}\,mr^2}}{4\sqrt{\I m}}\, \Bigl(G(1,2m)^*+G(3,2m)\Bigr).
\end{split}
\ee
Using the standard result for the Gauss sum
\be
G(3,m)=\left\{ \begin{array}{ll}
\left\{ \begin{array}{ll}
(-1)^{(m-4)/4}(1-\I )\sqrt{m}, & m=0\!\!\mod 4
\\
(-1)^{(m-1)/4}\sqrt{m}, & m=1\!\!\mod 4,
\\
0, & m=2\!\!\mod 4,
\\
(-1)^{(m-3)/4} \I \sqrt{m}, & m=3\!\!\mod 4,
\end{array}\right.
\qquad \gcd(3,m)= 1,
\\
3\,G(1,m/3),
\hspace{5cm}
\begin{array}{c}
\gcd(3,m)=3,
\\
m=0\!\!\mod 3,
\end{array}
\end{array}\right.
\ee
along with \eqref{G1m}, we can rewrite \eqref{chiSTeven0} as
\begin{equation}
\begin{split}
\chi_M(ST)
&=\frac{e^{\frac{\pi\I}{3}\,mr^2}}{4\sqrt{\I m}}
\left(\sqrt{2m}(1-\I ) +\left\{ \begin{array}{ll}
\sqrt{6m}(1+\I ), &\quad  m=0\!\!\mod 6
\\
\sqrt{2m}(1-\I ), &\quad  m=2\!\!\mod 6
\\
-\sqrt{2m}(1-\I ), &\quad  m=4\!\!\mod 6
\end{array}\right. \right)
\\&
=e^{\frac{\pi\I}{3}\,mr^2} \left\{ \begin{array}{ll}
e^{-\frac{\pi \I}{6}}, &\quad  m=0\!\!\mod 6
\\
e^{-\frac{\pi \I }{2}}, &\quad  m=2\!\!\mod 6
\\
0, &\quad  m=4\!\!\mod 6
\end{array}\right.
=\left\{ \begin{array}{ll}
e^{-\frac{\pi \I}{6}}, &\quad  m=0\!\!\mod 6,
\\
e^{\frac{\pi \I }{6}}, &\quad  m=2\!\!\mod 6,
\\
0, &\quad  m=4\!\!\mod 6,
\end{array}\right.
\end{split}
\label{chiSTeven}
\end{equation}
where in the last step we used that for $m=2 \!\!\mod 6$ one has
$\Nr=\pm 1,\pm 2 \!\!\mod 6$.

For $m$ odd, $\Nr$ is necessarily odd as well. We then have
\begin{equation}
\begin{split}
\chi_M(ST)&=\frac{(-1)^{\cI_\Nr}}{\sqrt{\I m}}\, e^{\frac{\pi\I}{4}\,mr^2 +\frac{\pi\I}{12}\, rc_2}
\left(1 +\sum_{\nu=1}^{(m-1)/2} (-1)^{\Nr\nu }\( e^{-\frac{\pi \I  \nu^2}{m}}+e^{\frac{3\pi \I \nu^2}{m}}\) \right).
\end{split}
\end{equation}
Substitution of $\cI_\Nr$ (\ref{chiODN}) and evaluation for low values of $m$ suggests that this can be further simplified to
\begin{equation}
\begin{split}
\chi_M(ST)&=e^{\frac{\pi\I}{12}\,m(r^2-1)}
\left\{ \begin{array}{ll}
e^{-\frac{\pi\I}{6}}, &\quad  m=1\!\!\mod 6,
\\
e^{\frac{\pi\I}{6}}, & \quad m=3\!\!\mod 6,
\\
0, & \quad  m=5\!\!\mod 6.
\end{array}\right.
\end{split}
\label{chiST0}
\end{equation}
Note that since $\Nr$ divides $m$, $\Nr=\pm 1 \!\!\mod 6$ if $m=1 \!\!\mod 6$,
while $\Nr=\pm 1$ or $3\!\!\mod 6$ if $m=3\!\!\mod 6$.
Therefore, $m(r^2-1)\in 24\IZ$ and we arrive at a result similar to \eqref{chiSTeven}
\begin{equation}
\chi_M(ST)=
\left\{ \begin{array}{ll}
e^{-\frac{\pi\I}{6}}, &\quad  m=1\!\!\mod 6,
\\
e^{\frac{\pi\I}{6}}, & \quad m=3\!\!\mod 6,
\\
0, & \quad  m=5\!\!\mod 6.
\end{array}\right.
\label{chiST}
\end{equation}

Substituting  \eqref{chiS}, \eqref{chiSTeven} and \eqref{chiST} into \eqref{defAsep} for
$w=-3/2$, we arrive at the final result for the elliptic contribution,  assuming that \eqref{chiST0} is indeed true,
\begin{equation}
\label{AsApAe}
\begin{split}
&A_{\rm e}= \left\{\begin{array}{ll} \frac{1}{4} (-1)^{\cI_\Nr}, &\quad  m=0,1 \!\!\mod 4,
\\
0, &\quad  m=2,3 \!\!\mod 4, \end{array}\right.
\qquad
+ \left\{\begin{array}{ll} -\frac{1}{3},&\quad  m=2,3\!\!\mod 6,
\\
0, &\quad  m=0,1,4,5\!\!\mod 6. \end{array}\right.
\end{split}
\end{equation}
Inserting this result in \eqref{dimCuspM}, one finds  the number of constraints on the polar terms,
\begin{equation}
\label{numconstr}
C_{r}(\CY)=
\frac{5d}{24}
-\frac{1}{2}\,\cI(M)-\sum_{\mu=0}^{d-1} ((\Delta_\mu))
+ \frac{1}{4}\, (-1)^{\cI_\Nr} \left( \delta_{m}^{(4)} +  \delta_{m-1}^{(4)} \right)
-\frac13 \left( \delta_{m-2}^{(6)} +  \delta_{m-3}^{(6)} \right),
\end{equation}
where we recall that $d=\lceil \frac{m+1}{2}\rceil$, $\delta_x^{(n)}$ is the mod-$n$ Kronecker delta
defined in \eqref{defdelta} and $\Delta_\mu$ is given by \eqref{Deltamu}.
The dimension of the space $\scM_\Nr(\CY)$ is obtained
by subtracting the number of constraints \eqref{numconstr} from the number of polar terms \eqref{numpolterms}:
\be
\dim \scM_\Nr(\CY)=
\sum_{\mu=0}^{d-1} \Delta_\mu
+\frac{7d}{24}
- \frac{1}{4} (-1)^{\cI_\Nr} \left( \delta_{m}^{(4)} +  \delta_{m-1}^{(4)} \right)
+\frac13 \left( \delta_{m-2}^{(6)} +  \delta_{m-3}^{(6)} \right).
\label{dimVV}
\ee
In particular, the dimension of $\scM_\Nr(\CY)$ grows proportionally to $m^2\Nr^2=\kappa^2\Nr^4$, while
the number of constraints grows at most linearly in $m=\kappa\Nr$.
In Table \ref{table2} we record the number of polar terms $n_r(\CY)$
and constraints $C_r(\CY)$ for rank up to 10 (see Table \ref{table1} for $r=1,2$).

\begin{table} [t]
\begin{centering}
$$
\begin{array}{|l|rr|rr|rr|rr|rr|rr|rr|rr|}
\hline
\CY & n_3 & C_3 & n_4 & C_4 & n_5 & C_5 & n_6 & C_6 & n_7 & C_7 & n_8 & C_8 & n_9 & C_9 & n_{10} & C_{10}  \\
\hline
 X_{5} & 96 & 1 & 241 & 3 & 475 & 2 & 923 & 4 & 1549 & 4 & 2595 & 6 & 3928 & 6 & 5961 & 5 \\
 X_{6} & 44 & 1 & 105 & 0 & 197 & 2 & 378 & 3 & 608 & 3 & 1014 & 0 & 1497 & 3 & 2283 & 4 \\
 X_{8} & 32 & 1 & 65 & 1 & 117 & 1 & 203 & 0 & 333 & 2 & 519 & 3 & 774 & 3 & 1121 & 3 \\
 X_{10} & 11 & 0 & 26 & 1 & 37 & 0 & 71 & 1 & 98 & 1 & 165 & 1 & 217 & 1 & 336 & 1 \\
 X_{4,2} & 126 & 0 & 312 & 0 & 659 & 1 & 1254 & 1 & 2192 & 2 & 3600 & 1 & 5606 & 3 & 8370 & 2
   \\
 X_{4,4} & 65 & 0 & 159 & 3 & 322 & 3 & 598 & 0 & 1033 & 4 & 1681 & 4 & 2591 & 1 & 3855 & 6
   \\
 X_{6,2} & 77 & 1 & 177 & 3 & 349 & 2 & 637 & 0 & 1084 & 2 & 1749 & 4 & 2679 & 3 & 3960 & 6
   \\
 X_{6,4} & 25 & 0 & 55 & 1 & 103 & 2 & 182 & 0 & 304 & 1 & 483 & 3 & 726 & 0 & 1066 & 3 \\
 X_{6,6} & 8 & 0 & 20 & 1 & 30 & 1 & 59 & 1 & 84 & 1 & 145 & 1 & 194 & 1 & 306 & 1 \\
 X_{3,3} & 237 & 3 & 627 & 1 & 1339 & 5 & 2650 & 7 & 4625 & 8 & 7770 & 2 & 12041 & 9 & 18292
   & 12 \\
 X_{4,2} & 208 & 0 & 525 & 4 & 1125 & 6 & 2150 & 1 & 3793 & 6 & 6254 & 9 & 9768 & 2 & 14630 &
   10 \\
 X_{3,2,2} & 399 & 3 & 1050 & 1 & 2325 & 6 & 4551 & 2 & 8127 & 9 & 13524 & 4 & 21285 & 11 &
   32025 & 5 \\
 X_{2,2,2,2} & 650 & 1 & 1766 & 9 & 3970 & 10 & 7840 & 4 & 14106 & 14 & 23581 & 15 & 37230 & 7
   & 56171 & 20 \\
   \hline
   \end{array}
$$
\caption{The number of polar terms $n_r(\CY)$ and the number of constraints $C_r(\CY)$ for $3\leq r\leq 10$
\label{table2}}
\end{centering}
\end{table}

\section{Generalized Hecke operator}
\label{sec-Hecke}

In this appendix we show how one can construct VV mock modular forms $\Gi{\kappa}$ from a given VV mock modular form $\Gi{1}$.
These modular forms are defined by the condition \eqref{relGTh},
which fixes the form of the completion, where $\Thi{\kappa}$ is given in \eqref{defRmu}.
Essentially, the difference between cases with different $\kappa$ is the representation that the modular forms belong to.
It is characterized by the multiplier system \eqref{STusual} and
is known as the Weil representation associated with the even integral lattice $\Lambda=2\kappa\IZ$ with the discriminant group
$\Lambda^*/\Lambda=\IZ_{2\kappa}$. Thus, we simply need to find an operator which maps modular forms from one Weil representation to another.
In addition, we also need to ensure that it properly maps the anomalies for our mock modular forms captured by the functions $\Thi{\kappa}$.
This gives rise to the two conditions spelled out in \S\ref{subsec-explicit}.

It turns out that an operator satisfying the first condition
%, namely, enlarging the Weil representation for modular forms,
has already been constructed in \cite{Bouchard:2016lfg,Bouchard:2018pem}:
\begin{theorem}[\cite{Bouchard:2018pem}]
Let $\Lambda$ be a lattice of signature $(b^+,b^-)$ with bilinear form $(\, \cdot\,, \,\cdot\,)$,
$A=\Lambda^*/\Lambda$, and $\Lambda(\kappa)$ is
the same lattice but rescaled bilinear form
$(\, \cdot\,, \,\cdot\,)_\kappa=\kappa(\, \cdot\,, \,\cdot\,)$.
Let $\phi_{\lambda\in A}$ be a VV modular form of weight $(w,\bw)$ and multiplier system
\be
\begin{split}
M_{\lambda\lambda'}(T)
=&\,
e^{\pi\I \lambda^2}\,\delta_{\lambda\lambda'},
\\
M_{\lambda\lambda'}(S)
=&\, \frac{1}{\sqrt{|A|}}\,
e^{-\frac{\pi\I}{4}\, (b^+-b^-) -2\pi\I(\lambda,\lambda')}.
\end{split}
\label{ST-hecke}
\ee
Then the vector
\be
(\cT_\kappa[\phi])_\mu(\tau)=\frac{1}{\kappa} \sum_{a,d>0 \atop ad=\kappa}
\left(\frac{\kappa}{d}\right)^{w+\bw+\hf(b^++b^-)} \delta_\kappa(\mu,d)
\sum_{b=0}^{d-1}\, e^{-\pi\I \, \frac{b}{a}\,\mu^2}
\phi_{d\mu} \(\frac{a\tau+b}{d}\),
\label{defHecke}
\ee
with $\mu\in A(\kappa)$
and
\be
\delta_\kappa(\mu,d)=\left\{ \begin{array}{ll}
1\ & \mbox{if } \mu\in A(d)\subseteq A(\kappa),
\\
0 \ & \mbox{otherwise,}
\end{array}\right.
\ee
is a VV modular form of the same weight and multiplier system \eqref{ST-hecke} where the bilinear form
is replaced by the rescaled one.

\end{theorem}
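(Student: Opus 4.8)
The assertion is that $\cT_\kappa$ carries VV modular forms for the Weil representation $\rho_\Lambda$ of $\Lambda$ into VV modular forms for the Weil representation $\rho_{\Lambda(\kappa)}$ of the rescaled lattice, preserving the weight $(w,\bw)$. Since $SL(2,\IZ)=\langle S,T\rangle$, the plan is to verify the two transformation laws \eqref{ST-hecke} --- now with $(\cdot,\cdot)$ replaced by $(\cdot,\cdot)_\kappa$ and $|A|$ replaced by $|A(\kappa)|=\kappa^{\rk\Lambda}|A|$ --- directly on the explicit average \eqref{defHecke}, and then to check holomorphy and growth. The first step I would take is to repackage \eqref{defHecke} as $\tfrac{1}{\kappa}\sum_\rho c_\mu(\rho)\,(\phi_{d\mu}\,|_{w,\bw}\,\rho)(\tau)$, the sum running over the $\sigma_1(\kappa)$ reduced upper-triangular integral matrices $\rho=\left(\begin{smallmatrix}a&b\\0&d\end{smallmatrix}\right)$ with $ad=\kappa$ and $0\le b<d$ --- a complete set of representatives for $\Gamma\backslash\Delta_\kappa$, $\Delta_\kappa=\{M\in M_2(\IZ):\det M=\kappa\}$ --- and $c_\mu(\rho)=\delta_\kappa(\mu,d)\,e^{-\pi\I b\mu^2/a}$. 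The combinatorial engine is the standard lemma that, for every $\gamma\in SL(2,\IZ)$, right multiplication $\rho\mapsto\rho\gamma$ induces a permutation of $\Gamma\backslash\Delta_\kappa$: one writes $\rho\gamma=\gamma'(\rho)\,\rho'$ with $\gamma'(\rho)\in\Gamma$ and $\rho'$ again reduced, the map $\rho\mapsto\rho'$ being a bijection. I would record $\gamma'$ and $\rho'$ explicitly for the two generators: $\rho T=\left(\begin{smallmatrix}a&a+b\\0&d\end{smallmatrix}\right)$, reduced by $b\mapsto(a+b)\bmod d$ together with a compensating element of $\Gamma$; and $\rho S=\left(\begin{smallmatrix}b&-a\\d&0\end{smallmatrix}\right)$, where clearing the lower-left entry $d$ by a left $\Gamma$-factor is an elementary $\gcd(b,d)$ computation that also fixes $(a',b',d')$.

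With the combinatorics in hand, the $T$-law reduces to bookkeeping: replacing $\tau$ by $\tau+1$, using that $\phi_{d\mu}$ is periodic up to the phase recorded in its $T$-multiplier, and reindexing the $b$-sum by the permutation above, the quadratic factor $e^{-\pi\I b\mu^2/a}$ is designed so that the shift $b\mapsto b+a\ (\mathrm{mod}\ d)$ produces the overall scalar $e^{\pi\I\mu^2}$ with $\mu^2=(\mu,\mu)_\kappa$, i.e. the rescaled $T$-multiplier. The $S$-law is the hard part, and I expect it to be the main obstacle. Substituting $\tau\mapsto-1/\tau$, writing $\rho S=\gamma'\rho'$ and inserting the modular transformation $\phi\circ\gamma'=j(\gamma',\cdot)\,M(\gamma')\,\phi$ with $j$ the (possibly non-holomorphic) automorphy factor of weight $(w,\bw)$, one balances these automorphy factors against the weight factors in the slash and the explicit power $(\kappa/d)^{w+\bw+\frac{1}{2}(b^++b^-)}$ --- it is precisely at this point that that exponent is forced. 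What remains is a finite exponential-sum identity that must collapse, after summing over $d\mid\kappa$, to $\tfrac{1}{\sqrt{|A(\kappa)|}}\,e^{-\frac{\pi\I}{4}(b^+-b^-)}\,e^{-2\pi\I(\mu,\nu)_\kappa}$: here the signature $b^+-b^-$ is unchanged by the rescaling (so the eighth root of unity is automatic), while $|A(\kappa)|=\kappa^{\rk\Lambda}|A|$ provides the extra $\kappa$-power. The identity itself is a Gauss-sum reciprocity --- the sums $\sum_b e^{-\pi\I b\mu^2/a}$ against the Weil $S$-matrix $\tfrac{1}{\sqrt{|A|}}e^{-2\pi\I(\lambda,\lambda')}$ of $\phi$ have to reassemble into the Weil $S$-matrix of $\Lambda(\kappa)$. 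I would attack this by first settling the case $\kappa$ prime, where there are only $p+1$ cosets and the reciprocity is the classical quadratic Gauss-sum evaluation of the kind already used in Appendix~\ref{sec-constraints} (cf. \eqref{G1m}), and then bootstrapping to prime powers by induction; as an independent cross-check, when $\Lambda$ is definite one can pass to the scalar theta-lift $\Phi=\sum_\mu\phi_\mu\,\Theta_{\Lambda,\mu}$ on some $\Gamma_0(N)$, apply the classical Hecke operator $T_\kappa$ there, and transport the statement back through the isomorphism between VV forms for $\rho_\Lambda$ and the corresponding space of $\Gamma_0(N)$-forms.

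Finally, one must control the analytic behaviour: each summand $\phi_{d\mu}\!\left(\tfrac{a\tau+b}{d}\right)$ is holomorphic on $\IH$ and has at worst a finite-order pole in a fractional power of $\q$ as $\tau\to\I\infty$, so the finite sum $\cT_\kappa[\phi]$ is holomorphic on $\IH$ and weakly holomorphic at the cusp; checking that the leading $\q$-exponents in each component are the ones prescribed by the rescaled multiplier system then completes the argument. The same scheme applies verbatim to the completion $\widehat\phi$ when $\phi$ is only mock modular --- which is the case actually needed in \S\ref{subsec-explicit} --- with the shadow transforming via the $\bar\tau$-derivative of \eqref{defHecke}; verifying that this reproduces the advertised map on anomalies is then the second property required there.
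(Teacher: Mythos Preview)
The paper does not contain a proof of this theorem: it is quoted verbatim as a result of Bouchard--Creutzig--Joshi \cite{Bouchard:2018pem} and used as a black box. Immediately after stating it, the authors specialize to their case $(b^+,b^-)=(0,1)$, $(w,\bw)=(3/2,0)$ and proceed to compute $\cT_\kappa[\Thi{1}]$ explicitly. So there is nothing in the present paper to compare your argument against.

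That said, your strategy is the standard one and is essentially what \cite{Bouchard:2018pem} does: rewrite $\cT_\kappa$ as a finite slash-average over $\Gamma\backslash\Delta_\kappa$, use that right multiplication by $\gamma\in SL(2,\IZ)$ permutes the cosets, and track the Weil-representation phases through the reindexing. Your identification of the $S$-case as a Gauss-sum reciprocity, with the exponent $w+\bw+\tfrac12(b^++b^-)$ fixed by matching the automorphy factors and the extra $\kappa^{\rk\Lambda}$ coming from $|A(\kappa)|=\kappa^{\rk\Lambda}|A|$, is correct in spirit. One caution: the induction from prime $\kappa$ to prime powers is not entirely automatic, because $\cT_{\kappa_1\kappa_2}\neq\cT_{\kappa_1}\circ\cT_{\kappa_2}$ in general (the intermediate lattice $\Lambda(\kappa_1)$ is not $\Lambda$), so one really has to redo the Gauss-sum bookkeeping for each divisor pair $(a,d)$ with $ad=\kappa$ rather than bootstrap. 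Your sketch also conflates the ``VV modular form'' conclusion with the weakly-holomorphic/mock case needed later; the theorem as stated is about genuine (holomorphic or real-analytic) modular forms, and the application to $\whGi{1}$ uses only that $\cT_\kappa$ intertwines the two Weil representations --- the analytic remarks in your last paragraph, while true, are not part of what is being cited here.
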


In our case we take the rescaled bilinear form to be $(k_1, k_2)_\kappa=-2\kappa k_1k_2$,
so that its signature is $(b^+,b^-)=(0,1)$. Then we replace $\mu$ by $-\frac{\mu}{2\kappa}$
with $\mu\in\{0,\dots,2\kappa-1\}$, so that $\mu^2$ becomes $-\frac{\mu^2}{2\kappa}$.
After these substitutions and choosing the weight $(w,\bw)=(3/2,0)$, the action of the generalized Hecke operator becomes
\be
(\cT_\kappa[\phi])_\mu(\tau)=\kappa\sum_{a,d>0 \atop ad=\kappa}d^{-2}\sum_{b=0}^{d-1}
\delta^{(1)}_{\mu/a} \, e^{\frac{\pi\I b}{2 a\kappa}\,\mu^2}
\phi_{\mu/a} \(\frac{a\tau+b}{d}\),
\label{defHecke-exp}
\ee
and the multiplier system \eqref{ST-hecke} coincides with the one in \eqref{STusual}.
This agreement justifies the application of the above theorem to our problem.
More precisely, acting by $\cT_\kappa$ on \eqref{relGTh} with $\kappa=1$, we obtain
\be
(\cT_\kappa[\whGi{1}])_\mu=(\cT_\kappa[\Gi{1}])_\mu+(\cT_\kappa[\Thi{1}])_\mu.
\label{relGTh-cT}
\ee
The theorem ensures that the l.h.s. is a VV modular form so that it can be identified (up to a constant factor $c_\kappa$)
with $\whGi{\kappa}_\mu$.
Provided the last term on the r.h.s. coincides with $c_\kappa\Thi{\kappa}_\mu$,
the first term can then be identified with $c_\kappa\Gi{\kappa}_\mu$
and the operator generating the solution \eqref{GkapH} can be taken to be $\cT'_\kappa=c_\kappa^{-1}\cT_\kappa$.

Let us evaluate the action of the Hecke operator on $\Thi{1}$ explicitly.
Substituting \eqref{defRmu} into \eqref{defHecke-exp}, one finds
\be
\begin{split}
(\cT_\kappa[\Thi{1}])_\mu=&\, \frac{\kappa}{8\pi}\sum_{a,d>0 \atop ad=\kappa}
\sum_{b=0}^{d-1}\delta^{(1)}_{\mu/a} \, e^{\frac{\pi\I b}{2 a\kappa}\,\mu^2}
\sum_{k\in 2\IZ+\frac{d\mu}{\kappa}}\frac{|k|}{d^2}\,\beta_{\frac{3}{2}}\!\(\frac{a\tau_2}{d}\, k^2\)
e^{-\frac{\pi\I }{2}\(\frac{a\tau+b}{d}\) k^2 }
\\
=&\, \frac{1}{8\pi}\sum_{a|\kappa,\mu}
\sum_{k\in 2\IZ+\frac{\mu}{a}}|ak|\,\beta_{\frac{3}{2}}\!\(\frac{\tau_2}{\kappa}\, (ak)^2\)
e^{-\frac{\pi\I \tau}{2\kappa}\, (ak)^2 }
\, \frac{a}{\kappa}\sum_{b=0}^{\frac{\kappa}{a}-1} e^{\frac{\pi\I b}{2 a\kappa}\,(\mu^2-(ak)^2)}.
\end{split}
\label{action1}
\ee
Representing $k=2\eps+\mu/a$ where $\eps\in\IZ$, the last factor becomes
\be
\frac{a}{\kappa}\sum_{b=0}^{\frac{\kappa}{a}-1} e^{\frac{\pi\I b}{2 a\kappa}\,(\mu^2-(ak)^2)}
=\frac{a}{\kappa}\sum_{b=0}^{\frac{\kappa}{a}-1} e^{-2\pi\I b\,\frac{a}{\kappa}\,(\eps^2+\eps\mu/a)}
=\delta^{(\kappa/a)}_{\eps(\eps+\mu/a)}.
\label{delta}
\ee
Thus, we arrive at the constraint
\be
\eps(\eps+\mu/a)=0\mod \kappa/a.
\label{epszero}
\ee
Let us denote $\cS(\mu,a)$ the set of its integer solutions in the range $0\le \eps<\kappa/a$ and note that $\cS(\mu,a)+n\kappa/a$
also solves \eqref{epszero} for any $n\in\IZ$.
Therefore, \eqref{action1} can be rewritten as
\be
\begin{split}
(\cT_\kappa[\Thi{1}])_\mu=&\, \frac{1}{8\pi}\sum_{a|\kappa,\mu}\sum_{\eps\in \cS(\mu,a)}
\sum_{k\in \frac{2\kappa}{a}\IZ+\frac{\mu}{a}+2\eps}|ak|\,\beta_{\frac{3}{2}}\!\(\frac{\tau_2}{\kappa}\, (ak)^2\)
e^{-\frac{\pi\I \tau}{2\kappa}\, (ak)^2 }
\\
=&\, \sum_{a|\kappa,\mu}\sum_{\eps\in \cS(\mu,a)}\Thi{\kappa}_{\mu+2a\eps}\, .
\end{split}
\label{HeckeTh}
\ee
To proceed further, we need to find $\cS(\mu,a)$ explicitly. Due to the invariance under
$\mu\to -\mu$ and $\mu\to\mu+2\kappa$, it suffices to consider $\mu=0,\dots,\kappa$.

First, let us consider the case where $\kappa$ is a prime number.
Then for $\mu=0$ and $\mu=\kappa$, $a$ takes two values, $1$ and $\kappa$,
and in both cases the only solution of \eqref{epszero} is $\eps=0$.
So \eqref{HeckeTh} results in $2\Thi{\kappa}_0$. On the other hand, for $1\le \mu<\kappa$, $a=1$ and the condition \eqref{epszero}
has two solutions: $\eps=0$ and $\eps=-\mu \mod \kappa$.
Thus, \eqref{HeckeTh} results in $\Thi{\kappa}_\mu+\Thi{\kappa}_{-\mu}=2\Thi{\kappa}_\mu$.
Hence, for all $\mu$ one obtains
\be
(\cT_\kappa[\Thi{1}])_\mu=2\Thi{\kappa}_\mu.
\label{Thkappa-prime}
\ee
Thus, one may simply take $\cT'_\kappa=\cT_\kappa/2$ in this case.

When $\kappa$ is non-prime, we were not able to find a general solution of \eqref{epszero}.
Nonetheless, it is straightforward to analyze small values of $\kappa$ case-by-case,
including the values $\kappa=4,6,8,9,12,16$ appearing in Table \ref{table1}.
Rather than listing the solutions of \eqref{epszero} in each case,  we shall simply
state the result of applying the Hecke operator on $\Thi{1}$ \eqref{HeckeTh}:
\begin{subequations}
\bea
(\cT_4[\Thi{1}])_\mu &=& 2\Thi{4}_\mu+\delta^{(2)}_{\mu}(\Thi{4}_\mu+\Thi{4}_{\mu+4}),
\label{res4}\\
(\cT_6[\Thi{1}])_\mu &=& 4\Thi{6}_\mu-2\delta^{(2)}_{\mu+1}(\Thi{6}_\mu-\Thi{6}_{\mu+6}).
\label{res6}
\\
(\cT_8[\Thi{1}])_\mu &=& 2\Thi{8}_\mu+2\delta^{(2)}_{\mu}(\Thi{8}_\mu+\Thi{8}_{\mu+8}),
\\
(\cT_9[\Thi{1}])_\mu &=& 2\Thi{9}_\mu+\delta^{(3)}_{\mu}(\Thi{9}_\mu+\Thi{9}_{\mu+6}+\Thi{9}_{\mu+12}),
\\
(\cT_{12}[\Thi{1}])_\mu &=& 4\Thi{12}_\mu+2\delta^{(2)}_{\mu}(\Thi{12}_\mu+\Thi{12}_{\mu+12})
\label{res12}\\
&&
-2\[\delta^{(4)}_{\mu+2}(\Thi{12}_\mu-\Thi{12}_{\mu+12})+\delta^{(6)}_{\mu+1}(\Thi{12}_\mu-\Thi{12}_{\mu+16})
+\delta^{(6)}_{\mu+5}(\Thi{12}_\mu-\Thi{12}_{\mu+8})\],
\nn\\
(\cT_{16}[\Thi{1}])_\mu &=& 2\Thi{16}_\mu+2\delta^{(2)}_{\mu}(\Thi{16}_\mu+\Thi{16}_{\mu+16})
+\delta^{(4)}_{\mu}(\Thi{16}_\mu+\Thi{16}_{\mu+8}+\Thi{16}_{\mu+16}+\Thi{16}_{\mu+24}).
\eea
\label{cTTh-kap}
\end{subequations}
Thus, unlike for prime $\kappa$, we cannot just take $\cT'_\kappa$ to be proportional to $\cT_\kappa$.
However, upon closer examination one can still find an operator $\cT'_\kappa$
that satisfies all the required conditions
when $\kappa=4,8,9,16$ (or more generally, when $\kappa$ is a prime power).
Indeed, in those cases,
each of the additional terms  in \eqref{cTTh-kap} can be shown to transform
in the correct representation due to the following proposition
(which is a variant of Proposition 1 from \cite{Alexandrov:2020bwg}):
\begin{proposition}
Let $\theta_\mu$ ($\mu=0,\dots,2\kappa-1$) be a VV modular form transforming with the multiplier system \eqref{STusual}
and $d\in\IN$ : $d^2$ divides $\kappa$.
Then the vector with components
\be
(\Sigma_{\kappa,d}[\theta])_\mu=\delta^{(d)}_\mu\sum_{n=0}^{d-1}\theta_{\mu+ 2n\kappa/d}
\label{defvector}
\ee
transforms according to the same representation.
\label{propSigma}
\end{proposition}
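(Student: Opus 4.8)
The plan is to reduce the claim to two matrix identities and verify them by a short Gauss‑sum computation, in the spirit of \cite{Alexandrov:2020bwg}. Write the map as $(\Sigma_{\kappa,d}[\theta])_\mu=\sum_{\nu}A_{\mu\nu}\,\theta_\nu$ with
\[
A_{\mu\nu}=\delta^{(d)}_\mu\sum_{n=0}^{d-1}\delta^{(2\kappa)}_{\nu-\mu-2ne},\qquad e:=\kappa/d ,
\]
and observe that the hypothesis $d^2\mid\kappa$ is equivalent to the single fact $d\mid e$, which will be the only arithmetic input used. Since $SL(2,\IZ)$ (or its metaplectic cover) is generated by $S$ and $T$ and the weight $3/2$ slash acts linearly on the components, it suffices to show that $A$ commutes with the two matrices $\Mi{\kappa}(T)$ and $\Mi{\kappa}(S)$ of \eqref{STusual}; then $(\Sigma_{\kappa,d}[\theta])\big|_\gamma=A\,\Mi{\kappa}(\gamma)\,\theta=\Mi{\kappa}(\gamma)\,A\,\theta=\Mi{\kappa}(\gamma)\,\Sigma_{\kappa,d}[\theta]$ for every $\gamma$, which is the assertion.

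For the $T$‑transformation, $\Mi{\kappa}(T)$ is diagonal with entries $e^{-\pi\I\mu^2/(2\kappa)}$, so $A\,\Mi{\kappa}(T)=\Mi{\kappa}(T)\,A$ amounts to the congruence $\mu^2\equiv\nu^2\pmod{4\kappa}$ whenever $A_{\mu\nu}\neq0$. For such a pair one has $d\mid\mu$ and $\nu=\mu+2ne$, hence $\nu^2-\mu^2=4ne\mu+4n^2e^2=4de\,(n\mu/d+n^2e/d)$, and the bracket is an integer by $d\mid\mu$ and $d\mid e$; thus $\nu^2-\mu^2\in4\kappa\IZ$ and $T$‑equivariance follows immediately.

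The substantive step is the $S$‑transformation, i.e.\ the entrywise check of $A\,\Mi{\kappa}(S)=\Mi{\kappa}(S)\,A$ with $\Mi{\kappa}_{\mu\nu}(S)=\sqrt{\I}\,(2\kappa)^{-1/2}\,e^{\pi\I\mu\nu/\kappa}$. On the left, inserting $\nu=\mu+2ne$ factors out $e^{\pi\I\mu\nu'/\kappa}$ and leaves the geometric sum $\sum_{n=0}^{d-1}e^{2\pi\I ne\nu'/\kappa}=\sum_{n=0}^{d-1}e^{2\pi\I n\nu'/d}=d\,\delta^{(d)}_{\nu'}$, giving $\frac{\sqrt{\I}}{\sqrt{2\kappa}}\,d\,\delta^{(d)}_\mu\,\delta^{(d)}_{\nu'}\,e^{\pi\I\mu\nu'/\kappa}$. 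On the right, one sums over $\nu$ with $d\mid\nu$ and $\nu\equiv\nu'-2me\pmod{2\kappa}$; since $d\mid e$ this forces $d\mid\nu'$, the value of $\nu$ is then uniquely fixed mod $2\kappa$ by each $m$, and the residual sum $\sum_{m=0}^{d-1}e^{-2\pi\I m\mu e/\kappa}=\sum_{m=0}^{d-1}e^{-2\pi\I m\mu/d}=d\,\delta^{(d)}_\mu$ reproduces exactly the same expression. Hence both sides agree and $A$ commutes with $\Mi{\kappa}(S)$.

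Finally I would note, for completeness, that $(\Sigma_{\kappa,d}[\theta])_\mu$ is well defined modulo $2\kappa$ in $\mu$ (since $d\mid 2\kappa$) and invariant under $\mu\mapsto-\mu$ (reindex $n\mapsto d-n$ and use $\theta_{-x}=\theta_x$ together with the periodicity of $\theta$ mod $2\kappa$), so that it genuinely lies in the representation space of \eqref{STusual}; its weight, and its mock/non‑holomorphic data, coincide with those of $\theta$ because $\Sigma_{\kappa,d}$ is a fixed linear combination. The only real obstacle is the bookkeeping in the $S$‑step: one must track carefully whether each summation index runs modulo $d$, $e$, $\kappa$, or $2\kappa$, and the argument turns entirely on the two load‑bearing uses of $d\mid e$ flagged above — making $\nu^2-\mu^2$ divisible by $4\kappa$ for $T$, and forcing $d\mid\nu'$ in the $S$‑computation.
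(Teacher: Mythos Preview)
Your proposal is correct and follows essentially the same approach as the paper. Both arguments verify the $T$-transformation via the congruence $(\mu+2n\kappa/d)^2\equiv\mu^2\pmod{4\kappa}$ (using $d\mid\mu$ and $d^2\mid\kappa$) and the $S$-transformation via the geometric sum $\sum_n e^{2\pi\I n\nu/d}=d\,\delta^{(d)}_\nu$; your packaging of this as the commutativity $A\,\Mi{\kappa}(\gamma)=\Mi{\kappa}(\gamma)\,A$ is a mild organizational choice rather than a different strategy.
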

\begin{proof}
First, we verify the T-transformation. Acting on each term in the sum \eqref{defvector},
it produces the following phase factor
\be
e^{-\frac{\pi\I}{2\kappa}\(\mu+ \frac{2n\kappa}{d}\)^2}=e^{-\frac{\pi\I}{2\kappa}\mu^2-2\pi\I\(\frac{n\mu}{d} + \frac{n^2\kappa}{d^2}\)}
=e^{-\frac{\pi\I}{2\kappa}\mu^2},
\ee
where we used that $d$ divides $\mu$ and $d^2$ divides $\kappa$.
The result reproduces the phase factor in \eqref{STusual}.
To check the S-transformation, we evaluate
\bea
&&
\frac{\sqrt{\I}}{\sqrt{2\kappa}}\,\delta^{(d)}_\mu
\sum_{n=0}^{d-1}\sum_{\nu=0}^{2\kappa-1} e^{\frac{\pi\I\nu}{\kappa}\(\mu+ \frac{2n\kappa}{d}\)}\,
\theta_{\nu}
=\frac{\sqrt{\I}}{\sqrt{2\kappa}}\, \sum_{m=0}^{d-1} e^{2\pi\I \mu \frac{m}{d}}
\sum_{\nu=0}^{2\kappa-1} e^{\frac{\pi\I\nu}{\kappa}\mu}\,\delta^{(d)}_\nu\,
\theta_{\nu}
\nn\\
&=& \frac{\sqrt{\I}}{\sqrt{2\kappa}}\,
\sum_{\nu=0}^{2\kappa-1}\delta^{(d)}_\nu\sum_{m=0}^{d-1}
e^{\frac{\pi\I\nu}{\kappa}\mu}\, \theta_{\nu-2m\kappa/d}
=\sum_{\nu=0}^{2\kappa-1}\Mi{\kappa}_{\mu\nu}(S)(\Sigma_{\kappa,d}[\theta])_\nu,
\eea
which confirms the correct transformation.
\end{proof}
Therefore, we can consider each equation \eqref{cTTh-kap} as a system of linear equations on the quantities $\Thi{\kappa}_\mu$
to be expressed through $(\cT_\kappa[\Thi{1}])_\mu$.
As a result, we obtain
\be
\Thi{\kappa}_\mu=(\cT'_\kappa[\Thi{1}])_\mu,
\ee
where
\begin{subequations}
\bea
(\cT'_\kappa[\phi])_\mu&=&\hf\,(\cT_\kappa[\phi])_\mu, \qquad \kappa\mbox{ --- prime},
\\
(\cT'_4[\phi])_\mu&=&\hf\,(\cT_4[\phi])_\mu-\frac18\(\Sigma_{4,2}[\cT_4[\phi]]\)_\mu,
\label{defG4}
\\
(\cT'_8[\phi])_\mu&=&\hf\,(\cT_8[\phi])_\mu-\frac16 \(\Sigma_{8,2}[\cT_8[\phi]]\)_\mu,
\label{defG8}
\\
(\cT'_9[\phi])_\mu&=&\hf\,(\cT_9[\phi])_\mu-\frac{1}{10}\(\Sigma_{9,3}[\cT_9[\phi]]\)_\mu,
\label{defG9}
\\
(\cT'_{16}[\phi])_\mu&=&\hf\,(\cT_{16}[\phi])_\mu
-\frac16\(\Sigma_{16,2}[\cT_{16}[\phi]]\)_\mu
-\frac{1}{60}\(\Sigma_{16,4}[\cT_{16}[\phi]]\)_\mu.
\label{defG16}
\eea
\label{defTp}
\end{subequations}
So in general for $\kappa=p^m$ where $p$ is a prime number we expect that
\be
(\cT'_{\kappa}[\phi])_\mu=\hf\sum_{n=0}^{\lfloor m/2\rfloor} c_{\kappa,n}\(\Sigma_{\kappa,p^n}[\cT_{\kappa}[\phi]]\)_\mu
\ee
where $c_{\kappa,0}=1$ and $c_{\kappa,n}$ with $n>0$ are some negative rational numbers.
Due to Proposition \ref{propSigma}, all terms in the sum transform with the same multiplier system \eqref{STusual}.
Hence, $\cT'_\kappa$ is the operator satisfying all our requirements
and allowing the identification \eqref{GkapH}.

Finally, let us consider the case when $\kappa=6$ or 12.
Although it can be checked that all additional terms in \eqref{res6} and \eqref{res12}
do transform with the proper multiplier system \eqref{STusual}, it turns out that these equations cannot be solved for $\Thi{\kappa}_\mu$
because \eqref{res6} does not depend on $\Thi{6}_1-\Thi{6}_7$, while \eqref{res12} does not involve
$\Thi{12}_1-\Thi{12}_7$ and $\Thi{12}_5-\Thi{12}_{11}$.
Thus, it seems that when $\kappa$ is a product of different prime integers,
our approach based on the generalized Hecke operator $\cT_\kappa$ does not work
and a more complicated construction is required.

\section{Generating functions for unit D4-brane charge}
\label{sec-genfun1}

In this section we provide tables of the rank 1 DT invariants $DT(Q,n)$ which enter in the Ansatz \eqref{hpolar}
for the polar part of the generating series $h_{1,\mu}$, and for the 10 models
in which a VV modular form with the required polar part exists,
give the generating series expressed as in \eqref{decomp-modform} and their first few terms in the $\q$-expansion.
In presenting these results, we underline the polar terms and put the number  $n$  of D0-branes responsible for
each polar term as a subscript.
We also discuss the remaining three models, in particular, how their polar terms can be corrected to allow for a solution.
The invariants $DT(Q,n)$ are computed from the GV invariants listed in \cite{Huang:2006hq}
(with some corrections kindly pointed out by the authors).
All computations can be found in an ancillary Mathematica notebook available on arXiv.

\subsection*{$\mathbf{X_5}$}

The lowest DT invariants are as follows (extending the table in \cite{Collinucci:2008ht}):
\be
{\scriptsize
\begin{array}{|c|cccccccc}
\hline
Q \backslash n & -2 & -1 & 0 & 1 & 2 & 3 \\ \hline
0  & 0 & 0 & 1 & 200 & 19500 & 1234000 \\
1 & 0 & 0 & 0 & 2875 & 569250 & 54921125 \\
2 & 0 & 0 & 0 & 609250 & 124762875 & 12448246500 \\
3 & 0 & 0 & 609250 & 439056375 & 76438831000 & 7158676736750 \\
4 & 8625 & 2294250 & 4004590375 & 1010473893000 & 123236265797125 & 9526578133835000 \\
\hline
\end{array}
}\nn
\ee
The generating function is found to be\footnote{Here and below it is understood that the argument $z$
of the theta function $\vths{\kappa,1}_{\mu}$ defined in \eqref{Vignerasth} is set to zero after taking derivative.
The Fourier expansion is given only for the components with $0\le \mu\le \kappa/2$ as the other components are fixed by
the symmetry $h_{1,\mu}=h_{1,-\mu}$.}
\be
\begin{split}
h_{1,\mu}=&\, -\frac{1}{2\pi\eta^{70}} \[-\frac{222887 E_4^8+1093010 E_4^5 E_6^2+177095 E_4^2 E_6^4}{35831808}
\right.
\\
&\,
+\frac{25 \(458287 E_4^6 E_6+967810 E_4^3 E_6^3+66895 E_6^5\)}{53747712}\, D
\\
&\, \left.
+\frac{25 \(155587 E_4^7+1054810 E_4^4 E_6^2+282595 E_4 E_6^4\)}{8957952}\, D^2\]\p_z\vths{5,1}_{\mu},
\end{split}
\ee
and has the following expansion (which agrees with the results in \cite[Eq.(3.10)]{Gaiotto:2006wm} and \cite[Eq.(2.3)]{Gaiotto:2007cd}):
\be
\begin{split}
h_{1,0} =&\, \q^{-\frac{55}{24}} \left( \underline{5_0 - 800_1 q + 58500_2 \q^2} + 5817125 \q^3 + 75474060100 \q^4
%+ 28096675153255 \q^5
+ \dots \right),
\\
h_{1,1} =&\, \q^{-\frac{55}{24}+\frac35}  \left( \underline{0_0+8625_1 q}- 1138500 \q^2 + 3777474000 \q^3
+ 3102750380125 \q^4 + \dots\right),
\\
h_{1,2} =&\, \q^{-\frac{55}{24}+\frac25}
\left(\underline{0_{-1}+0_0 q}  -1218500 \q^2 + 441969250 \q^3 + 953712511250 \q^4
%+ 217571250023750 \q^5
+ \dots\right).
\end{split}
\ee

\subsection*{$\mathbf{X_6}$}

The lowest DT invariants are as follows:
\be
{\scriptsize\begin{array}{|c|ccccccc}
 \hline
Q \backslash n & -3 & -2 & -1 & 0 & 1 & 2 & 3 \\  \hline
0& 0 & 0 & 0 & 1 & 204 & 20298 & 1311584 \\
1& 0 & 0 & 0 & 0 & 7884 & 1592568 & 156836412 \\
2& 0 & 0 & 0 & 7884 & 7636788 & 1408851522 & 136479465324 \\
3& 6 & 1836 & 266526 & 169502712 & 43151185260 & 5487789706776 & 440955379766460 \\
4& -47304 & -24852636 & 6684091812 & 3616211898459 & 597179528504352 & 56820950585055180 &
   3715523804755065780 \\  \hline
\end{array}} \nn
\ee
The generating function is found to be
\be
\begin{split}
h_{1,\mu}=&\,- \frac{1}{2\pi\eta^{54}} \[\frac{7 E_4^6+58 E_4^3 E_6^2+7 E_6^4}{216}
+\frac{5 E_4^4 E_6+3 E_4 E_6^3}{2}\, D\]\p_z\vths{3,1}_{\mu},
\end{split}
\ee
and has the following expansion (which agrees with \cite[Eq.(2.7)]{Gaiotto:2007cd}, up to overall sign):
\be
\begin{split}
\hspace{-0.2cm}
h_{1,0} =&\,  \q^{-\frac{15}{8}}
\left( \underline{-4_0 +612_1 \q }-  40392 \q^2 +146464860 q^3 +66864926808 \q^4 + \dots\right),
\\
\hspace{-0.2cm}
h_{1,1} =&\,  \q^{-\frac{15}{8}+\frac{2}{3}} \left( \underline{0_0 - 15768_1 \q}
 +7621020 \q^2 + 10739279916 \q^3 +1794352963536 \q^4 +\dots
\right).
\end{split}
\ee

\subsection*{$\mathbf{X_8}$}

The lowest DT invariants are as follows:
\be
\hspace*{-1.5cm}
{\scriptsize
\begin{array}{|c|cccccc} \hline
Q\backslash n & -2 & -1 & 0 & 1 & 2 & 3 \\ \hline
0&   0 & 0 & 1 & 296 & 43068 & 4104336 \\
1& 0 & 0 & 0 & 29504 & 8674176 & 1253300416 \\
2&  6 & 2664 & 564332 & 204456696 & 45540821914 & 6127608486208 \\
3& -177024 & -69481920 & 8775447296 & 6313618655104 & 1225699503521536 &
   141978726005461504 \\ \hline
%4& 24 & 11544 & 2692447 & 790545032 & -14444966512 & -18999897620888 & -282166623291836 &
%   489723672951363128 & 91303556219228252276 & 9495673563388190495304 \\
   \end{array}}
\nn
\ee
The generating function is found to be
\be
\begin{split}
h_{1,\mu}=&\, \frac{1}{\eta^{52}} \[\frac{103 E_4^6+1472 E_4^3 E_6^2+153 E_6^4}{5184}
+\frac{503 E_4^4 E_6+361 E_4 E_6^3}{108}\, D\]\vths{2,1}_{\mu},
\end{split}
\ee
and has the following expansion (which agrees with \cite[Eq.(2.11)]{Gaiotto:2007cd}, up to overall sign):
\be
\begin{split}
\hspace{-0.2cm}
h_{1,0}=&\, \q^{-\frac{46}{24}} \left( \underline{-4_0 + 888_1 \q} - 86140 \q^2 +132940136 \q^3 +86849300500 \q^4
+\dots\right),
\\
\hspace{-0.2cm}
h_{1,1}=&\, \q^{-\frac{46}{24}+\frac34} \left( \underline{0_0 - 59008_1  \q} + 8615168 \q^2 +21430302976 \q^3
+3736977423872 \q^4 +\dots\right).
\end{split}
\ee

\subsection*{$\mathbf{X_{10}}$}

The lowest DT invariants are as follows:
\be
{\scriptsize
\begin{array}{|c|ccccccc} \hline
Q \backslash n & -3 & -2 & -1 & 0 & 1 & 2 & 3 \\ \hline
0 & 0 & 0 & 0 & 1 & 288 & 40752 & 3774912 \\
1& 0 & 0 & 3 & 1150 & 435827 & 89103872 & 11141118264 \\
2 & -12 & -5181 & -1529746 & -64916198 & 40225290446 & 9325643249563 & 1112733511380100 \\
\hline
\end{array}
}\nn
\ee
The generating function is found to be$^\dagger$
\be
\label{X10gen}
\begin{split}
h_{1,0}=&\, \frac{541 E_4^4+1187 E_4 E_6^2}{576\, \eta^{35}}
\\
=&\, \q^{-\frac{35}{24}}\Bigl(
\underline{3_0-576_1 q}+271704 \q^2+ 206401533 \q^3+ 21593767647 \q^4
+\cdots\Bigr),
\end{split}
\ee
where we took into account that $\p_z\vths{1,1}_0(\tau,0)=-2\pi\eta^3(\tau)$.
This result agrees with \cite[Eq.(2.12)]{Gaiotto:2007cd}.\footnote{It was suggested in \cite{VanHerck:2009ww}
that the second polar coefficient should be modified to $-575$, but this suggestion was
not taken seriously in the initial version of the present work. As explained in the note on page \pageref{noteadded},
it is in fact confirmed by the mathematical results of \cite{Feyzbakhsh:2022ydn}, hence
we have marked \eqref{X10gen} with a $\dagger$.  The correct expansion can be found
in \cite[(5.10)]{VanHerck:2009ww}. \label{fooWyder}}

\subsection*{$\mathbf{X_{4,3}}$}

The lowest DT invariants are as follows:
\be
%\hspace*{-2cm}
{\scriptsize
\begin{array}{|c|cccccccccc} \hline
 Q \backslash n  & -6 & -5 & -4 & -3 & -2 & -1 & 0 & 1 & 2 %& 3
  \\ \hline
0 & 0 & 0 & 0 &  0 & 0 & 0 & 1 & 156 & 11778 %& 572416
 \\
1 & 0 & 0 & 0 & 0 & 0 & 0 & 0 & 1944 & 299376 %& 22295736
 \\
2 & 0 & 0 & 0 & 0 & 0 & 0 & 27 & 227772 & 36634842 %& 2866527576
\\
3 & 0 & 0 & 0 &  0 & 0 & 0 & 161248 & 89961744 & 12314066208 %& 902854908856
\\
4 & 0 & 0 & 0 &  0 & 81 & 240408 & 418646475 & 90148651920 & 9065616005898 %& 565604629363620
 \\
5 & 0 & 0 & 0 & 5832 & 1100304 & 3996193968 & 7007431566096 & 1058781525672312 & 79955621660025792
   %& 3923422742124141264
    \\
6 & 10 & 2496 & 275273 & -21407812 & 69458828969 & 32461114565928 & 5111995215726463 &
   460091731369849584 & 28020271480178497520 %& 1254082943482176992644
    \\
   \hline
\end{array}
}
\nn
\ee
The generating function is found to be$^\dagger$
\be
\begin{split}
h_{1,\mu}=&\, \frac{1}{\eta^{72}} \[\frac{709709 E_4^7 E_6 - 3221146 E_4^4 E_6^3 - 1359283 E_4 E_6^5}{637009920}
\right.
\\
&\,
+\frac{1106929 E_4^8 + 5476894 E_4^5 E_6^2 + 604657 E_4^2 E_6^4}{26542080}\, D
\\
&\,
-\frac{58663 E_4^6 E_6 + 117682 E_4^3 E_6^3 + 7975 E_6^5}{73728}\, D^2
\\
&\, \left.
-\frac{62453 E_4^7 + 395798 E_4^4 E_6^2 + 94709 E_4 E_6^4}{46080}\, D^3\]\vths{6,1}_{\mu},
\end{split}
\ee
and has the following expansion:$^\dagger$
\be
\begin{split}
h_{1,0}=&\, \q^{-\frac94} \left( \underline{5_0 - 624_1 \q + 35334_2 \q^2}
+19017138 \q^3 + 74785371360 \q^4 +\dots \right),
\\
h_{1,1} =&\, \q^{-\frac94+\frac{7}{12}} \left( \underline{0_0+5832_1 \q}
-544806 \q^2 + 3919919670 \q^3 + 2506521890376 \q^4+\dots \right),
\\
h_{1,2} =&\, \q^{-\frac94+\frac13} \left( \underline{0_{-1}+81_0\q}
-455787\q^2 + 418792680 \q^3 + 589406281317 \q^4 +\dots \right),
\\
h_{1,3} =&\, \q^{-\frac94+\frac14} \left( \underline{0_{-2}+0_{-1}\q}
-322658 \q^2 + 154766856 \q^3 + 356674009104 \q^4+\dots \right).
\end{split}
\ee

\subsection*{$\mathbf{X_{4,4}}$}

The lowest DT invariants are as follows:
\be
{\scriptsize
\begin{array}{|c|cccccccc} \hline
 Q \backslash n & -4 & -3 & -2 & -1 & 0 & 1 & 2 & 3 \\ \hline
 0 & 0 & 0 & 0 & 0 & 1 & 144 & 10008 & 446304 \\
 1 & 0 & 0 & 0 & 0 & 0 & 3712 & 527104 & 36091776 \\
 2 & 0 & 0 & 0 & 0 & 1408 & 1185216 & 160488768 & 11145152320 \\
 3 & 0 & 0 & 0 & 3712 & 7495680 & 1728263936 & 172767389440 & 10375330097920 \\
 4 & 6 & 1296 & 112296 & 153732336 & 48667802732 & 6124054838960 & 444235976561624 & 21742669957124080 \\ \hline
\end{array}
} \nn
\ee
Although there is a modular constraint on polar terms, it turns out to be satisfied by our Ansatz due to
the following relation between the DT invariants
\be
\label{uncanny1}
DT(0,0)+\frac{3}{16}\, DT(0,1)-\frac{1}{32}\, DT(1,1) +\frac{1}{16}\, DT(2,0) = 0.
\ee
The resulting  generating function is found to be
\be
\begin{split}
h_{1,\mu}=&\, \frac{1}{\eta^{56}} \[\frac{319 E_4^5 E_6+113 E_4^2 E_6^3}{11664 }
-\frac{146 E_4^6+1025 E_4^3 E_6^2+125 E_6^4}{972}\, D
\right.
\\
&\,\left.
-\frac{566 E_4^4 E_6+298 E_4 E_6^3}{81}\, D^2\]\vths{4,1}_{\mu},
\end{split}
\ee
and has the following expansion:
\be
\begin{split}
\hspace{-0.2cm}
h_{1,0}=&\, \q^{-\frac{44}{24}} \left( \underline{-4_0 + 432_1 \q} - 10032 \q^2 + 148611456 \q^3 +53495321332 \q^4
+\dots\right),
\\
\hspace{-0.2cm}
h_{1,1} =&\, \q^{-\frac{44}{24}+\frac58} \left( \underline{0_0 - 7424_1  \q} +7488256 \q^2 +7149513728 \q^3
+1104027086592 \q^4 +\dots\right),
\\
\hspace{-0.2cm}
h_{1,2} =&\, \q^{-\frac{44}{24}+\frac12} \left( \underline{0_{-1} - 2816_0  \q} +2167680 \q^2 + 3503031296 \q^3
+619015800576 \q^4 +\dots\right).
\end{split}
\label{exp-hX62}
\ee

\subsection*{$\mathbf{X_{6,2}}$}

The lowest DT invariants are as follows:
\be
{\scriptsize
\begin{array}{|c|cccccccc} \hline
 Q \backslash n  &-4 & -3 & -2 & -1 & 0 & 1 & 2 & 3 \\ \hline
 0 & 0 & 0 & 0 & 0 & 1 & 256 & 32128 & 2633216 \\
 1 & 0 & 0 & 0 & 0 & 0 & 4992 & 1267968 & 157842048 \\
 2 & 0 & 0 & 0 & -4 & -1536 & 2129180 & 592221184 & 76687779936 \\
 3 & 0 & 0 & 0 & 14976 & 5071872 & 3527640064 & 784442776832 & 94963960029952 \\
 4 & 10 & 4096 & 810898 & 87634944 & 84783721868 & 25072077880832 & 3730330724940930 &
   357859766301860864 \\ \hline
\end{array}
} \nn
\ee
The generating function is found to be$^\dagger$
\be
\begin{split}
h_{1,\mu}=&\, \frac{1}{\eta^{68}} \[-\frac{994693 E_4^8+4317814 E_4^5 E_6^2+2152453 E_4^2 E_6^4}{161243136}
\right.
\\
&\,
+\frac{1974661 E_4^6 E_6+5095030 E_4^3 E_6^3+395269 E_6^5}{4478976}\, D
\\
&\, \left.
+\frac{738373 E_4^7+5203702 E_4^4 E_6^2+1522885 E_4 E_6^4}{559872}\, D^2\]\vths{4,1}_{\mu},
\end{split}
\ee
and has the following expansion:\symfootnote{As indicated in the note on page \pageref{noteadded},
the ansatz \eqref{hpolar} fails to give the correct polar terms for this model, so these results should not be trusted.
}
\be
\begin{split}
\hspace{-0.4cm}
h_{1,0}=&\,\q^{-\frac{56}{24}} \left( \underline{5_0 -1024_1 \q + 96384_2 \q^2}
-1082400 \q^3 + 87565497502 \q^4
+\dots\right),
\\
\hspace{-0.4cm}
h_{1,1} =&\, \q^{-\frac{56}{24}+\frac58} \left( \underline{0_0 + 14976_1  \q}  -1135328 \q^2 + 2168240416 \q^3 + 3646461843520 \q^4
+\dots\right),
\\
\hspace{-0.4cm}
h_{1,2} =&\, \q^{-\frac{56}{24}+\frac12} \left( \underline{16_{-1} -4608_0  \q} -5272444 \q^2 + 903979584 \q^3 + 2117148662336 \q^4
 +\dots\right).
\end{split}
\ee

\subsection*{$\mathbf{X_{6,4}}$}

The lowest DT invariants are as follows:
\be
{\scriptsize
\begin{array}{|c|ccccccc}  \hline
Q \backslash n &  -3 & -2 & -1 & 0 & 1 & 2 & 3 \\ \hline
0 & 0 & 0 & 0 & 1 & 156 & 11778 & 572416 \\
1 & 0 & 0 & 0 & 8 & 16800 & 2489232 & 182945216 \\
2 &  0 & 3 & 608 & 315828 & 71744924 & 7624177244 & 492335041044 \\
3&  -48 & -72184 & 26107984 & 10989768672 & 1476019954080 & 112615254328992 & 5813857713864192 \\ \hline
\end{array}} \nn
\ee
The generating function is found to be$^\dagger$
\be
\begin{split}
h_{1,\mu}=&\, \frac{1}{\eta^{40}} \[-\frac{509 E_4^3 E_6+139 E_6^3}{2592}
-\frac{233 E_4^4+415 E_4 E_6^2}{108}\, D\]\vths{2,1}_{\mu},
\end{split}
\ee
and has the following expansion:$^\dagger$
\be
\begin{split}
\hspace{-0.6cm}
h_{1,0}=&\,\q^{-\frac{34}{24}} \left( \underline{3_0 - 312_1 \q} +269343 \q^2 + 133568456 \q^3 + 12400947182 \q^4
+\dots\right),
\\
\hspace{-0.6cm}
h_{1,1} =&\, \q^{-\frac{34}{24}+\frac34} \left( \underline{-16_0} + 31904 \q + 36568960 \q^2 + 4364805376 \q^3 + 226013798816 \q^4+\dots\right).
\end{split}
\ee

\subsection*{$\mathbf{X_{6,6}}$}

The lowest DT invariants are as follows:
\be
{\scriptsize
\begin{array}{|c|ccccccc} \hline
Q \backslash n & -3 & -2 & -1 & 0 & 1 & 2 & 3 \\ \hline
0 &  0 & 0 & 0 & 1 & 120 & 6900 & 252400 \\
1 & 0 & 0 & 1 & 482 & 117445 & 10668592 & 545062022 \\
2 &   -6 & -1684 & 130808 & 67782432 & 7543637572 & 456342386980 & 18275307362778 \\
%3 &  -10 & -3127 & 548438 & 52880790 & 2363903886 & 354583175019 & 40633950148312 &
%   2626136370520921 & 111239276671341184 & 3396621112744802119 \\
\hline
\end{array}
} \nn
\ee
The generating function is found to be
\be
\begin{split}
h_{1,0}=&\,
-\frac{2E_4 E_6}{\eta^{23}}
\\
=&\,  \q^{-\frac{23}{24}} \left( - \underline{2_0}
+482 \q + 282410 \q^2 + 16775192 \q^3
+ 460175332 \q^4+\dots \right).
\end{split}
\ee

\subsection*{$\mathbf{X_{3,3}}$}

The lowest DT invariants are as follows:
\be
{\scriptsize
\begin{array}{|c|ccccccc} \hline
Q \backslash n &  -3 & -2 & -1 & 0 & 1 & 2 & 3 \\ \hline
0 & 0 & 0 & 0 & 1 & 144 & 10008 & 446304 \\
1 &  0 & 0 & 0 & 0 & 1053 & 149526 & 10238319 \\
2&  0 & 0 & 0 & 0 & 52812 & 8053182 & 591031890 \\
3& 0 & 0 & 0 & 3402 & 6914214 & 1001912544 & 71961634872 \\
4 & 0 & 0 & 0 & 5520393 & 1937967282 & 225717793668 & 14749020131814 \\
5& 0 & 0 & 5520393 & 5626721862 & 1006811225253 & 88682916004956 & 4943255069504250 \\
6 &  10206 & 8383878 & 24521163804 & 6662846868372 & 768849614982540 & 52757172850669686 &
   2484705136566066336 \\ \hline
\end{array}
} \nn
\ee
Although there is a modular constraint on polar terms, it turns out to be satisfied by our Ansatz due to
the following relation between the DT invariants
\be
\label{uncanny2}
DT(0, 1) - \frac{2}{15}\, DT(0, 2) - \frac{92}{135}\,DT(1, 1) +\frac{1}{90}\, DT(1, 2)
+\frac{1}{90}\, DT(2, 1) -  \frac{1}{10}\, DT(3, 0) =0.
\ee
The resulting  generating function is found to be
\be
\begin{split}
h_{1,\mu}=&\, -\frac{1}{2\pi\eta^{90}} \[\frac{47723 E_4^9 E_6+25095 E_4^6 E_6^3-68943 E_4^3 E_6^5-3875 E_6^7}{107495424}
\right.
\\
&\,
+\frac{289326 E_4^{10}+415189 E_4^7 E_6^2-3458324 E_4^4 E_6^4-729839 E_4 E_6^6}{334430208}\, D
\\
&\,
+\frac{2261629 E_4^8 E_6+3219046 E_4^5 E_6^3-6371 E_4^2 E_6^5}{30965760}\, D^2
\\
&\,
-\frac{94271 E_4^9+1496733 E_4^6 E_6^2+1342665 E_4^3 E_6^4+52315 E_6^6}{5160960}\, D^3
\\
&\,\left.
-\frac{162167 E_4^7 E_6+300338 E_4^4 E_6^3+35159 E_4 E_6^5}{286720}\, D^4\]\p_z\vths{9,1}_{\mu},
\end{split}
\ee
and has the following expansion (which agrees with the results in  \cite[\S 2.5]{Gaiotto:2007cd}, up to overall sign):
\be
\begin{split}
\hspace{-0.4cm}
h_{1,0}=&\, \q^{-\frac{63}{24}}  \left( -\underline{6_0 + 720_1 \q - 40032_2 \q^2}
-678474 \q^3 + 30885198768 \q^4+\dots \right),
\\
\hspace{-0.4cm}
h_{1,1} =&\, \q^{-\frac{63}{24}+\frac59}  \left( \underline{0_0 - 4212_1 \q +448578_2 \q^2}
+374980104 \q^3 +2020724648442 \q^4+\dots \right),
\\
\hspace{-0.4cm}
h_{1,2} =&\, \q^{-\frac{63}{24}+\frac29}  \left( \underline{0_{-1} + 0_0 \q + 158436_1 \q^2}
-12471246 \q^3 +174600085086 \q^4+\dots \right),
\\
\hspace{-0.4cm}
h_{1,3} =&\, \q^{-\frac{63}{24}}  \left( \underline{0_{-2} + 0_{-1} \q +10206_0 \q^2}
-13828428 \q^3+24425287884 \q^4+\dots \right),
\\
\hspace{-0.4cm}
h_{1,4} =&\, \q^{-\frac{63}{24}+\frac89}  \left( \underline{0_{-2} + 0_{-1} \q}-11040786 \q^2
+6769752552 \q^3 + 17629606262268 \q^4 +\dots\right).
\end{split}
\ee

\subsection*{$\mathbf{X_{4,2}}$}

The lowest DT invariants are as follows:
\be
{\scriptsize
\begin{array}{|c|cccccc} \hline
Q \backslash n &   -2 & -1 & 0 & 1 & 2 & 3 \\ \hline
0 &  0 & 0 & 1 & 176 & 15048 & 831776 \\
1 &  0 & 0 & 0 & 1280 & 222720 & 18814720 \\
2&  0 & 0 & 0 & 92288 & 16876672 & 1497331072 \\
3 &   0 & 0 & 2560 & 16105728 & 2880650752 & 252911493632 \\
4 &  -8 & -2112 & 17161392 & 6933330304 & 961734375064 & 75838156759744 \\
\hline
%5 &  0 & 0 & 0 & 7680 & 17442048 & 27813391360 & 5838374138880 & 605381921310720 &
%   40186463544881408 \\
%6 &  0 & 0 & 276864 & 63068800 & 181691667456 & 55305843800064 & 7340543410533504 &
%   588952351216901888 & 32861807140209248896 \\
%7 &  7680 & 2019840 & -47343872 & 1982061989888 & 853910330510592 & 138895734299216896 &
%   13000015792354166272 & 825183456291427083776 & 38720311494528731297280 \\ \hline
\end{array}
} \nn
\ee
Our Ansatz \eqref{hpolar} implies the following polar terms
\be
\begin{split}
\hpol_{1,0} =&\, \q^{-\frac{8}{3}} \left( {-6_0 + 880_1 \q- 60192_2 \q^2}\right),
\\
\hpol_{1,1} =&\, \q^{-\frac{8}{3}+\frac{9}{16}} \left( {0_0-5120_1 \q+668160_2 \q^2} \right),
\\
\hpol_{1,2} =&\, \q^{-\frac{8}{3}+\frac14} \left( {0_{-1}+ 0_0 \q + 276864_1 \q^2}\right),
\\
\hpol_{1,3} =&\, \q^{-\frac{8}{3}+\frac1{16}} \left( {0_{-2}+0_{-1}\q+7680_0 \q^2} \right),
\\
\hpol_{1,4} =&\, \q^{-\frac{8}{3}} \left( {0_{-3}+32_{-2}\q-6336_{-1}\q^2}\right).
\end{split}
\label{polarX42}
\ee
However, they fail to satisfy the constraint imposed by modularity,
which would require that the DT invariants fulfill the relation
\be
\begin{split}
& DT(0,0) +\frac{5}{12}\, DT(0,1)-\frac{1}{6}\, DT(0,2)-\frac{29}{48}\,DT(1,1)+\frac{1}{64}\,DT(1,2)
\\
&
-\frac{9}{64}\,DT(3,0)-\frac{1}{8}\,DT(4,-1)+\frac13\, DT(4,-2)=0.
\end{split}
\ee
Barring a possible error in the table of GV invariants in \cite{Huang:2006hq}, we conclude that
the Ansatz \eqref{hpolar} does not produce the correct polar terms in this case. Given that it works in
many other cases, one might try to modify it in a minimal fashion, by changing just one or two
polar coefficients so as to restore modularity.
For example, it turns out that if one replaces $\hpol_{1,4}$ in \eqref{polarX42} by
\be
\hpol_{1,4} = \q^{-\frac{8}{3}} \left( {0_{-3}+(32+k)_{-2}\q-(2152+2k)_{-1}\q^2}\right),
\qquad
k\in\IZ,
\ee
one does find a modular form with integer coefficients.
Two choices of $k$ seem to be particularly interesting. If $k=0$, only one polar coefficient is changed, while
if $k=-20$, one ends up with the last polar coefficient given by $-2112=DT(4,-1)$,
which differs by the coefficient 1/3 from the Ansatz \eqref{hpolar}. However, besides these numerical observations,
we do not have any physical arguments in favor of one of these choices, and it may well be that
more than one polar coefficient is incorrectly predicted by our Ansatz in this case.

\subsection*{$\mathbf{X_{3,2,2}}$}

The lowest DT invariants are as follows:
\be
{\scriptsize
\begin{array}{|c|cccccccccc} \hline
Q \backslash n & -3 & -2 & -1 & 0 & 1 & 2 & 3 \\ \hline
0 &  0 & 0 & 0 & 1 & 144 & 10008 & 446304 \\
1 &0 & 0 & 0 & 0 & 720 & 102240 & 7000560 \\
2&  0 & 0 & 0 & 0 & 22428 & 3443616 & 254303604 \\
3&  0 & 0 & 0 & 64 & 1620720 & 245622240 & 18019908288 \\
4& 0 & 0 & 0 & 265113 & 206421552 & 27955859922 & 1957624164576 \\
5&  0 & 0 & 10080 & 199558944 & 50497608240 & 5249855378592 & 323810241865488 \\
6& -56 & -12096 & 179713440 & 115538513824 & 18048558130992 & 1472617884239424 &
   78052676370951268 \\ \hline
%7&   0 & 30240 & 203853024 & 345042582768 & 80621151144192 & 8605768430007504 &
%   561617101323021600 & 25517508028002438480 \\
%8 &  795339 & 314958816 & 1468365205338 & 494686342906560 & 66941793467835048 &
%   5249471105266132320 & 278353308826324254564 & 10843701013395147777552 \\
% 9 & 0 & 384 & 4917456 & 519490080 & 8909333763728 & 4248567723831264 & 707444710981195680 &
%   64972679520222511168 & 3934540848742651343232 & 172443965918460509441952 &
%   5784988575317507193572480 \\
% 10 & 134568 & 29432160 & -24749413236 & 75833702847360 & 51670942749167580 & 10267890376922104032 &
%   1075860045576428947740 & 72803503592971115549472 & 3523736900952742355517240 &
%   129588714153462190884744960 & 3765473345253850763179362732 \\
\end{array}
} \nn
\ee
Our Ansatz \eqref{hpolar} implies the following polar terms
\be
\begin{split}
\hpol_{1,0} =&\, \q^{-3} \left( {7_0 - 864_1 \q + 50040_2 \q^2}\right),
\\
\hpol_{1,1} =&\,\q^{-3+\frac{13}{24}} \left( {0_0 + 3600_1 \q - 408960_2 \q^2}\right),
\\
\hpol_{1,2} =&\,\q^{-3+\frac16} \left({0_{-1} + 0_0 \q -89712_1 \q^2}\right),
\\
\hpol_{1,3} =&\,\q^{-3+\frac78} \left({0_{-1} -256_0 \q + 4862160_1 \q^2}\right),
\\
\hpol_{1,4} =&\, \q^{-3+\frac23} \left( {0_{-2} +0_{-1} \q +795339_0 \q^2}\right),
\\
\hpol_{1,5} =&\,\q^{-3+\frac{13}{24}} \left( {0_{-3}+0_{-2} \q+30240_{-1} \q^2}\right),
\\
\hpol_{1,6} =&\,\q^{-3+\frac{1}{2}} \left( {0_{-4}+224_{-3}\q -36288_{-2} \q^2}\right).
\end{split}
\label{polarX322}
\ee
However, they fail to satisfy the constraint imposed by modularity, which would
require that the DT invariants fulfill the relation
\be
\begin{split}
& DT(0,0) -\frac{8}{21}\,DT(0, 1) - \frac{5}{21}\,DT(0, 2) +\frac{25}{126}\,DT(1, 1) +\frac{1}{63}\,DT(1, 2)
-\frac{16}{63}\, DT(2, 1)
\\
&  + \frac{1}{7}\, DT(3, 0) +\frac{1}{84}\,DT(3, 1) -\frac{1}{21}\, DT(4, 0) +\frac{1}{21}\,DT(6, -2)-\frac{8}{84}\,DT(5, -1).
\end{split}
\ee
As in the previous case, one might try to change just one or two polar coefficients so as to restore modularity.
For example, it turns out that if one replaces $\hpol_{1,6}$ in \eqref{polarX322} by
\be
\hpol_{1,6} = \q^{-3+\frac{1}{2}} \left( {0_{-4}+(224+k)_{-3}\q-12096_{-2}\q^2}\right),
\qquad
k\in\IZ,
\ee
one does find a modular form with integer coefficients.
Note that for such modification the last polar coefficient is given by $-12096=DT(6,-2)$,
which, like in the case of $X_{4,2}$, differs from the Ansatz \eqref{hpolar} by a coefficient 1/3.
However, this could just be a coincidence.

\subsection*{$\mathbf{X_{2,2,2,2}}$}

The lowest DT invariants are as follows:
\be\hspace{-0.5cm}
{\scriptsize
\begin{array}{|c|cccccccc} \hline
 Q \backslash n & -4 & -3 & -2 & -1 & 0 & 1 & 2 & 3 \\ \hline
 0 &  0 & 0 & 0 & 0 & 1 & 128 & 7872 & 308992 \\
 1 & 0 & 0 & 0 & 0 & 0 & 512 & 64512 & 3900928 \\
 2 & 0 & 0 & 0 & 0 & 0 & 9728 & 1356544 & 90337792 \\
 3 & 0 & 0 & 0 & 0 & 0 & 416256 & 57428992 & 3811304448 \\
 4 & 0 & 0 & 0 & 0 & 14752 & 27592192 & 3615258880 & 233963061760 \\
 5 & 0 & 0 & 0 & 0 & 8782848 & 3089741312 & 334005965824 & 19901940605440 \\
 6&0 & 0 & 0 & 1427968 & 2857640448 & 528800790528 & 44911222707968 & 2345453425978368 \\
7 &  0 & 0 & 86016 & 2451858432 & 934638858240 & 116559621707264 & 8004013269150720 &
   363671494077060608 \\
8 & -672 & -129024 & 2392944768 & 1945381563648 & 356833378589872 & 32067803814853376 &
   1801967963699774848 & 71093859294029974016 \\ \hline
%9 &  0 & 258048 & 2484372480 & 5033430293504 & 1345766951897600 & 155660139853486080 &
%   10654086118015130112 & 496532327040030112768 & 17024770728429381227008 \\
%10 &  4283904 & 3368563200 & 16733780022784 & 6420694261417728 & 936978678720185856 &
%   76729093368371391488 & 4158211929260432549888 & 162955206976468213356544 &
%   4873072804894013081558016 \\
\end{array}
 } \nn
\ee
Our Ansatz \eqref{hpolar} implies the following polar terms
\be
\begin{split}
\hpol_{1,0} =&\, \q^{-\frac{10}{3}} \left({-8_0 + 896_1 \q - 47232_2 \q^2 + 1544960_3 \q^3}\right) ,
\\
\hpol_{1,1} =&\, \q^{-\frac{10}{3}+\frac{17}{32}} \left({0_0 - 3072_1 \q + 322560_2 \q^2}\right) ,
\\
\hpol_{1,2} =&\, \q^{-\frac{10}{3}+\frac18} \left( {0_{-1} + 0_0 \q +48640_1 \q^2 - 5426176_2 \q^3} \right) ,
\\
\hpol_{1,3} =&\, \q^{-\frac{10}{3}+\frac{25}{32}} \left( {0_{-1}+0_0 \q-1665024_1 \q^2}\right) ,
\\
\hpol_{1,4} =&\, \q^{-\frac{10}{3}+\frac12} \left( {0_{-2}+0_{-1}\q-59008_0 \q^2}\right) ,
\\
\hpol_{1,5} =&\, \q^{-\frac{10}{3}+\frac{9}{32}} \left( {0_{-3}+0_{-2}\q+0_{-1}\q^2+26348544_0 \q^3}\right) ,
\\
\hpol_{1,6} =&\, \q^{-\frac{10}{3}+\frac18} \left( {0_{-4}+0_{-3}\q+0_{-2}\q^2+4283904_{-1} \q^3}\right) ,
\\
\hpol_{1,7} =&\, \q^{-\frac{10}{3}+\frac{1}{32}} \left( {0_{-5}+0_{-4}\q+0_{-3}\q^2+258048_{-2} \q^3}\right)  ,
\\
\hpol_{1,8} =&\, \q^{-\frac{10}{3}} \left( {0_{-6}+0_{-5}\q+2688_{-4}\q^2-387072_{-3}\q^3}\right).
\end{split}
\label{polarX2222}
\ee
However, they fail to satisfy three constraints imposed by modularity in this case.
One can again find a modular form with integer coefficients by appropriately modifying the polar terms.
In contrast to the cases of $X_{4,2}$ and $X_{3,2,2}$, it appears that we have to modify
at least 4 coefficients appearing in $\hpol_{1,\mu}$ with $\mu=6,7,8$ for such a solution to exist.
 For example,  if one replaces these functions in \eqref{polarX2222} by
\be
\begin{split}
\hpol_{1,6} =&\,\q^{-\frac{10}{3}+\frac18} \left( {0_{-4}+0_{-3}\q+0_{-2}\q^2+(2674832+440 k)_{-1} \q^3}\right) ,
\\
\hpol_{1,7} =&\, \q^{-\frac{10}{3}+\frac{1}{32}} \left( {0_{-5}+0_{-4}\q+0_{-3}\q^2-(469056-32 k)_{-2} \q^3}\right)  ,
\\
\hpol_{1,8} =&\, \q^{-\frac{10}{3}} \left( {0_{-6}+0_{-5}\q+(2690+23 k)_{-4}\q^2-(366336+128 k)_{-3}\q^3}\right),
\qquad
k\in\IZ,
\end{split}
\ee
one does find a modular form with integer coefficients. Of course, modularity could also
be restored by an even more drastic modification of the polar coefficients.

\section{Comparison with mathematical results \label{sec_math}}

In  \cite{Feyzbakhsh:2022ydn}, explicit formulae for rank 0 DT invariants are proven for
any smooth polarized CY threefold $\CY$ satisfying a technical condition known as
the Bogomolov-Gieseker inequality \cite{bayer2011bridgeland},
which is known to hold for the quintic $X_5$ \cite{li2019stability}, $X_6$, $X_8$ \cite{koseki2020stability}
and for $X_{4,2}$ \cite{liu2021stability}. A somewhat less explicit formula was proven
earlier for one-parameter CY threefolds in \cite[Thm 3.18]{Toda:2011aa}.
In this section, we translate Thm 1.1 of \cite{Feyzbakhsh:2022ydn} in our notations,
and compare to our Ansatz \eqref{hpolargenr} for the polar terms. We refrain from discussing
Thm 1.2 in {\it loc. cit.}, as it it is different in spirit from our Ansatz, but we anticipate that
it may also give valuable information on D4-D2-D0 indices \cite{followup}.

Let $v\in K(X)$ be a  rank-zero dimension-two class with Chern character
\be
(\ch_0,\ch_1,\ch_2,\ch_3)(v) =(0,D,\beta,m)\equiv v
\ee
with $D\neq 0$. Let
\be
Q_H(v) = \frac12 \left( \frac{ D\cdot H^2}{H^3 }\right)^2 +
6 \left(\frac{\beta\cdot H}{D\cdot H^2}\right)^2 -\frac{12 m}{D\cdot H^2}\, ,
\ee
where $H=c_1(\cO_\CY(1))$ is the polarization. According to \cite[Thm 1.1]{Feyzbakhsh:2022ydn},
$H$-Gieseker semistable sheaves of class $v$ can only exist only if $Q_H(v)\geq 0$.
Moreover, when $v$ satisfies
\be
\label{condFeyz}
(H^3)^2\, Q_H(v) < D\cdot H^2 -\frac52 + \frac{2}{D\cdot H^2}  - \frac{2}{(D\cdot H^2)^2}
\ee
then the DT invariant counting   $H$-Gieseker semistable sheaves of class $v$
is given by the explicit formula\footnote{In transcribing \cite[Thm 1.1]{Feyzbakhsh:2022ydn}, we
exchanged $v_1$ and $v_2$, set (after the exchange) $n_i=(-1)^{i-1}m_i$, and
denoted $DT(\beta\cdot \omega_a,n)={\mathrm I}_{n,\beta}$
and  $PT(\beta\cdot \omega_a,n)={\mathrm P}_{n,\beta}$ where $\omega_a$ is a basis in $H^2(\CY,\IZ)$.}
\be
\label{eqFeyz}
J(v)= \left(\sharp H^2(\CY,\IZ)_{\rm tors}\right)^2
\sum_{v_1+v_2=v\atop v_i\in M_i(v)}
(-1)^{\chi(v_1,v_2)-1} \chi(v_1,v_2)\, DT(\beta_1\cdot \omega_a,n_1)\,PT(\beta_2\cdot \omega_a,n_2),
\ee
where
\be
\begin{split}
v_1=&\, e^{D_1}(1,0,-\beta_1,-n_1),
\\
v_2=&\, -e^{D_2}(1,0,-\beta_2,n_2),
\end{split}
\label{defvv}
\ee
$\chi(-,-)$ is the Euler form given by
\bea
\chi(E,E') &=& \int_{\CY} \ch(E^*) \ch(E')\, \Td\CY
\\
&=& \ch_0 \left( \ch'_3 +\frac{1}{12}\,c_2(T\CY) \ch'_1\right)
- \ch'_0  \left(  \ch_3  +\frac{1}{12}\, c_2(T\CY) \ch_1\right) - \ch_1 \ch'_2 +  \ch_2 \ch'_1,
\nn
\eea
while $PT(Q_a,n)$ are the Pandharipande-Thomas (PT) invariants,
given by the same generating series as \eqref{gvc2} without the Mac-Mahon factor,
\be
Z_{PT}(\xi^a,q) =  \sum_{Q_a,n} PT(Q_a,n)\,e^{2 \pi \I Q_a \xi^a}  q^{n}
=  [M(-q)]^{- \chi_{\scriptstyle\CY}} Z_{DT} (\xi^a,q).
\ee
The sum in \eqref{eqFeyz} runs over  $(D_i,\beta_i,n_i)$, $i=1,2$ satisfying the inequalities
\be
\label{ineqFeyz}
\begin{split}
&
\frac12 \left( \frac{ D_i\cdot H^2}{H^3 }\right)^2 - \frac{D_i\cdot D_i\cdot H}{2H^3} +
\frac{\beta_i\cdot H}{H^3}\leq \frac{D \cdot H^2-2}{2(H^3)^2}\, ,
\\
& \qquad\qquad
n_i \geq -\frac{D\cdot H^2(D\cdot H^2+H^3)}{6 (H^3)^2}\, .
\end{split}
\ee
By the Grothendieck-Lefschetz theorem (see e.g. \cite[Ch. IV]{hartshorne2006ample}),
 $H^2(\CY,\IZ)$ is torsion free for any complete intersection
in a smooth projective variety, so the prefactor
$\left(\sharp H^2(\CY,\IZ)_{\rm tors}\right)^2$ in \eqref{eqFeyz} is trivial for the models
$X_5,X_{3,3},X_{4,2},X_{3,2,2},X_{2,2,2,2}$ considered in this paper. For the other models, the
ambient weighted projective space is singular, and $H^2(\CY,\IZ)$ could have non-trivial torsion
\cite{ravindra2005grothendieck}. We leave the determination of this factor as an open problem.

In  the notations of \S\ref{subsec-polar1},
one has
\be
\begin{split}
D =&\, r H,
\qquad\qquad
\qquad \beta\cdot H= -\mu-\frac{\kappa r^2}{2}\,,
\qquad
m
=\frac{\kappa r^3}{6} -n ,
\\
D_i =&\, (-1)^{i-1} r_i H,
\qquad\
\beta_i\cdot H= m_i,
\end{split}
\ee
with $H^3=\kappa$, such that
\be
Q_H(v) = \frac{12}{\kappa r} \left( \frac{\chi(r \cD)}{24} - \hat q_0 \right),
\ee
\be
\chi(v_1,v_2)=\cI_r-r(m_1+m_2)-n_1-n_2\equiv - \gamma_{12}.
\label{chigam}
\ee
The bound $Q_H(v)\geq 0$ is then recognized as the Bogomolov bound \eqref{qmax}, while the condition
\eqref{condFeyz} for the validity of \eqref{eqFeyz} becomes
\be
\label{ineqFeyz2}
\frac{12 \kappa}{r}  \left( \frac{\chi(r \cD)}{24} - \hat q_0 \right)
< \kappa r -\frac52 + \frac{2}{\kappa r} - \frac{2}{(\kappa r)^2}\, .
\ee
These two conditions may be written more compactly as
\be
\label{ineqFeyz3}
\frac{\kappa r^3}{24}\,  A(\kappa r) < \hat q_0 - \frac{\chi(r \cD)}{24} \leq 0\, ,
\ee
where $A(x) = (1-\frac{1}{x}+\frac{2}{x^2})^2-1$. The function $A(x)$ is positive
for $x<2$ and negative for $x>2$. It has a minimum $A=-\frac{15}{64}$ at $x=4$ and asymptotes to $-2/x$ as $x$ becomes large.
Thus, the range of validity of the formula \eqref{eqFeyz} shrinks as $\kappa$ and $r$ increase,
and is empty for $\kappa=r=1$.

To rewrite the formula \eqref{eqFeyz} in our notations, first let us write explicitly the condition $v=v_1+v_2$.
It leads to the following three equalities
\be
\label{v12}
\begin{split}
r_1+r_2=&\, r,
\\
m_1-m_2=&\, \mu+\kappa rr_1,
\\
n_1+n_2=&\, n-r_1 m_1-r_2 m_2-\frac{\kappa}{2}\, rr_1r_2.
\end{split}
\ee
The first two equations admit an integer solution for $r_1,r_2$ provided
\be
\label{m12mod}
m_1-m_2=\mu \mod \kappa r
\ee
After eliminating $r_1,r_2$,
%One can solve these relations with respect to $r_1$ and $r_2$ as
%\be
%r_1=\frac{m_1-m_2-\mu}{\kappa r}\, ,
%\qquad
%r_2=\frac{\mu-m_1+m_2+\kappa r^2}{\kappa r}\, ,
%\label{rimn}
%\ee
the last relation in \eqref{v12} requires
\be
n_1+n_2=n+\frac{1}{2\kappa r}\,\Bigl(\mu^2-(m_1-m_2)^2+\kappa r^2(\mu-m_1-m_2)\Bigr).
\label{sumni}
\ee
Finally,  the inequalities \eqref{ineqFeyz} take the simple form
\be
m_i \leq  \frac{r}2 -\frac{1}{\kappa}\,,
\qquad
n_i  \geq -\frac16\, r(r+1)\, .
\label{ineqDTPT}
\ee
As a result, the formula \eqref{eqFeyz} takes the following form
\be
\label{eqFeyz2}
\bOm_{r,\mu}(\hq_0)=
\sum_{m_i, n_i}
(-1)^{\gamma_{12}}\gamma_{12}
\, DT(m_1,n_1)\, PT(m_2,n_2),
\ee
where $\gamma_{12}$ is given in \eqref{chigam}, and the sum is subject to the conditions  \eqref{m12mod}, \eqref{sumni}, \eqref{ineqDTPT} 
and $m_i\geq 0$. Since the condition \eqref{sumni} bounds $n_1,n_2$ from above,
the sum is manifestly finite.

The formula \eqref{eqFeyz2} is reminiscent of our Ansatz \eqref{hpolar} for $r=1$, or \eqref{hpolargenr} for higher $r$.
Note however that for the models considered in this paper,
the condition \eqref{ineqFeyz3} is never satisfied except for the most polar term
where $\hat q_0=\frac{\chi(r \cD)}{24}$ (assuming $\kappa\geq 2$), in which case $\mu=m_i= n_i=0$.

%\bibliography{combined}
%\bibliographystyle{utphys}
%\end{document}

\providecommand{\href}[2]{#2}\begingroup\raggedright\endgroup

\end{document}